\def\Return{\State\textbf{return}\ }
\newcommand{\radeq}{\smash{\stackrel{\mathrm{\footnotesize rad}}{=}}}
\newcommand{\ol}[1]{\overline{#1}}
\newcommand{\lt}{\operatorname{lt}}
\newcommand{\lm}{\operatorname{lm}}
\newcommand{\lcm}{\operatorname{lcm}}
\newcommand{\sig}{\mathfrak{s}}
\newcommand{\ind}{\operatorname{ind}}
\newcommand{\sat}[2]{(#1:#2^{\infty})}
\newtheorem{definition}{Definition}
\numberwithin{definition}{section}
\newtheorem{theorem}[definition]{Theorem}
\newtheorem{proposition}[definition]{Proposition}
\newtheorem{lemma}[definition]{Lemma}
\newtheorem{example}[definition]{Example}
\theoremstyle{remark}
\newtheorem{remark}{Remark}[section]
\def\st{\ \middle|\ }
\def\poly{\operatorname{poly}}
\def\quo{\operatorname{quo}}
\title{A Signature-based Algorithm for\\ Computing
  the Nondegenerate Locus of a Polynomial System\footnote{This work has been supported by European Union's Horizon 2020 research and innovation programme under the Marie Skłodowska-Curie Actions, grant agreement 813211 (POEMA)
  by European Research Council under the European Union's Horizon Europe research and innovation programme, grant agreement 101040794 (10000~DIGITS);
  by the joint ANR-FWF grant ANR-19-CE48-0015 (ECARP),
  the ANR grant ANR-19-CE40-0018 (De Rerum Natura), the DFG
  Sonderforschungsbereich TRR 195, and the Forschungsinitiative Rheinland-Pfalz.
}}
\author{Christian Eder \thanks{TU Kaiserslautern, Germany}
  \and Pierre Lairez \thanks{Inria, Uni. Paris-Saclay, Palaiseau, France}
  \and Rafael Mohr \footnotemark[2] \footnotemark[4]
  \and Mohab Safey El Din \thanks{Sorbonne Uni., CNRS, Paris, France}
  }
\begin{document}
\maketitle 

\begin{abstract}
  Polynomial system solving arises in many application areas to model non-linear
  geometric properties. In such settings, polynomial systems may come with
  degeneration which the end-user wants to exclude from the solution set. The
  nondegenerate locus of a polynomial system is the set of points where the
  codimension of the solution set matches the number of equations.

  Computing the nondegenerate locus is classically done through ideal-theoretic
  operations in commutative algebra such as saturation ideals or equidimensional
  decompositions to extract the component of maximal codimension.

  By exploiting the algebraic features of signature-based Gr\"obner basis
  algorithms we design an algorithm which computes a Gröbner basis of the
  equations describing the closure of the nondegenerate locus of a polynomial system, without
  computing first a Gröbner basis for the whole polynomial system.
\end{abstract}
\section{Introduction}

\paragraph{Problem Statement}
Fix a field $\mathbb{K}$ with an algebraic closure $\ol{\mathbb{K}}$ and a polynomial ring
$R:=\mathbb{K}[x_1,\dots,x_n]$ over $\mathbb{K}$. Let $f_1,\dots,f_c\in R$ and
$V:=\big\{ p\in \ol{\mathbb{K}}^n \ \big|\ f_1(p)=\dots=f_c(p)=0 \big\}$.
Further define the ideal $I:=\langle f_1,\dots, f_c\rangle = \{\sum_{i=1}^c q_i
f_i \mid q_i \in R\}$.
The algebraic set $V$ is a finite union of irreducible components.
By the
Principal Ideal Theorem \cite[Theorem~10.2]{eisenbud_comm-alg} the
codimension of the $\mathbb{K}$-irreducible components of $V$ is at most $c$.
Let $V_c$ denote the union of the components of~$V$ of codimension exactly~$c$.
In particular~$V_c = \varnothing$ when~$c > n$.

The goal of this paper is to compute a Gr\"obner basis of an ideal whose zero
set is $V_c$, which we call the \emph{nondegenerate locus} of the system $f_1,
\ldots, f_c$ (note that we may not compute a \emph{radical} ideal).

{\em Prior works and scientific locks.} 
State-of-the-art algorithms to compute the nondegenerate locus of $f_1,\dotsc,f_c$ rely on the more general problem of
computing the equidimensional decomposition of the ideal that they generate.
There is a vast body of literature split along what data structure is used for the output into two research lines.

The first family of algorithms computes a Gröbner basis of the ideal of each component.
There are two different approaches in this line.
The first uses projections, computed with \emph{elimination orderings}, to reduce the problem to a problem for hypersurfaces \cite{GianniTragerZacharias_1988,KrickLogar_1991,CaboaraContiTraverse_1997}.
The second relies on homological characterizations of the dimension and the computation of free resolutions \cite{EisenbudHunekeVasconcelos_1992}.
See \cite{DeckerGreuelPfister_1999,GreuelPfister_2007,Vasconcelos_1998} and
references therein for further references.
Both approaches use Gröbner basis algorithms as a black box for performing various ideal-theoretic operations, in particular ideal quotients (also known as colon ideals).

A second family of algorithms outputs equidimensional components of $I$ or its
radical through \emph{lazy} representations, i.e. as complete intersections over a
non-empty Zariski open set. This is the case for the so-called regular chains
which go back to Wu-Ritt characteristic sets \cite{Wu_1986}.

These put into practice a kind of D5 principle \cite{DellaDoraDicrescenzoDuval_1985} to split geometric objects by
enforcing an equiprojectability property.
See~\cite{Hubert_2003,Wang_1993,Wang_2001,ChouGao_1990,AubryLazardMorenoMaza_1999,LemaireMorenoMazaPanXie_2011}
and references therein for further references. When the base field $\mathbb{K}$ has characteristic $0$ (or large enough characteristic), geometric resolution
algorithms \cite{GiustiLecerfSalvy_2001} can also be used. These culminate with
the incremental algorithm in \cite{lecerf_geom_resolution,Lecerf_2003} which
avoids equiprojectability issues by performing a linear change of variables to ensure
Noether position properties. One feature is that input polynomials are encoded
with straight-line programs to take advantage of evaluation properties. See also \cite{JeronimoSabia_2002} for a similar approach. It also gives the best known
complexity for equidimensional decomposition: linear in the evaluation
complexity of the input system and polynomial in some algebraic degree.

As of software, the computer algebra systems Singular \cite{Singular_CAS}, Macaulay2~\cite{M2} and Magma~\cite{BosmaCannonPlayoust_1997}  implement
the algorithm of \cite{EisenbudHunekeVasconcelos_1992}
to perform equidimensional decomposition. Maple
implements algorithms for computing regular chains \cite{DMSWX05, DJMS08, CGLMP07, CLMPX07, LMP09} and algorithms
based on Gr\"obner bases. The algorithm by Gianni et
al.~\cite{GianniTragerZacharias_1988} is used for prime
decomposition and, combined with techniques from~\cite{BWK93}, for
equidimensional decomposition. All these implementations use Gr\"obner basis
algorithms as a black box.

\paragraph{Main results}
By contrast with previous work, we only focus on computing the nondegenerate locus of a system, not the full equidimensional decomposition of the corresponding ideal. The main difference to other Gröbner basis based techniques to compute equidimensional decompositions is that we enlarge $I$ \emph{while} a Gröbner basis for $I$ is computed and return a Gröbner basis of a nondegenerate locus of $I$ when this Gröbner basis computation is finished.
Modifying or splitting the ideal in question in the middle of Gr\"obner
basis algorithms is a natural and appealing idea \cite[e.g.][]{Grabe97}.

This idea requires one to answer {\em (i)} when the ideal in question should be enlarged
and {\em (ii)} how to
minimize the cost of enlarging the ideal in question. The algorithm we propose
tackles both issues.

We tackle problem {\em(i)} by following the incremental structure of the sGB algorithms on which our work is based
\cite{faugere_f5, g2v-algorithm}. We describe this sGB algorithm in section \ref{sec:f5-algorithm}. Incremental means here that these algorithms proceed by computing first a Gröbner basis
for $\langle f_1,f_2\rangle$ then use the result to compute a Gröbner basis for $\langle f_1,f_2,f_3\rangle$ and so on. In addition, sGB algorithms
keep track of an auxiliary data structure, called a \textit{signature}, which is attached to each considered
polynomial. This enables one to \emph{exclude} certain polynomials from the set of polynomials to be processed by reduction
in Buchberger's algorithm.

As a consequence they have the feature that, having computed a Gröbner basis for $I_{i-1}:=\langle f_1,\dots,f_{i-1}\rangle$, a reduction to
zero happens in the Gröbner basis computation for $I_i$ if any only if
$f_i$ is a zero divisor modulo $I_{i-1}$. In this case $V_{i-1}:=V(I_{i-1})$ has irreducible components on which $f_i$ is identically
zero (the union of which is henceforth denoted $V_{i-1,f_i=0}$)
and components which are not contained in the hypersurface $V(f_i)$ (the union of which is henceforth denoted $V_{i-1,f_i\neq 0}$).
Assuming that $V(I_{i-1})$ is equidimensional of codimension $i-1$, to compute an ideal representing the
nondegenerate locus of $V(I_i)$ we may then proceed as follows (see Algorithm \ref{alg:decomp}):
\begin{enumerate}
\item Compute ideals representing $V_{i-1,f_i=0}$ and $V_{i-1,f_{i}\neq 0}$ (via the ideal-theoretic operation of \textit{saturation}).
\item Compute an ideal representing $W:=V_{i-1,f_{i}\neq 0}\cap V(f_i)$.
\item Remove from $W$ all components contained in $V_{i-1,f_i=0}$ (again via saturation).
\end{enumerate}

Iterating over the set of input equations with these three steps, using the result of each
iterative step as input for the next invocation of this loop and slightly adapting the third step to remove all components which are contained in components of higher dimension then yields an ideal representing
the nondegenerate locus of $I$. We describe this algorithm from a purely algebraic perspective in section \ref{sec:core_loop}.

To tackle problem \emph{(ii)} we exploit a feature of the incremental sGB algorithms first captured in the G2V algorithm
\cite{g2v-algorithm}: The data of a signature can be enlarged so as to \emph{simultaneously} compute
a Gröbner basis for $I_i$ and the quotient ideal $(I_{i-1}:f_i):=\{g\in R\;|\;gf_{i}\in I_{i-1}\}$ (which, if $I_{i-1}$ is a radical ideal, corresponds precisely to
$V(I_{i-1,f_i\neq 0})$) in each incremental step.

Using this idea we modify the baseline sGB algorithm we use to simultaneously perform steps 1 and 2 of the above loop (i.e. in
a single Gröbner basis computation).
This is done essentially by immediately inserting an element $g\in (I_{i-1}:f_i)$ once it is identified during the run of the sGB algorithm. We manage this insertion of elements that do not lie in the original ideal $I$ with a data structure we call an \emph{sGB tree} (see section \ref{sec:f5-tree-datastr}) which allows us to
perform this modification with the needed technical properties of signatures ensured.
This yields a signature-based version of Algorithm \ref{alg:decomp}, Algorithm \ref{alg:nondeg-f5tree}. Besides managing the insertion of new generators into some initial ideal, the sGB data structure also leaves open the future possibility of designing signature-based ideal decomposition algorithms.

We finally show experimentally in section \ref{sec:experiments} that the consequence of this simple modification is a massive cost reduction in the overhead
compared to a ``naive'' implementation of Algorithm \ref{alg:decomp} where one uses saturation procedures as a blackbox. As is also shown, it additionally enables us to compute the nondegenerate locus of systems which are out of the reach of equidimensional decomposition algorithms available in state of the art computer algebra systems.

\section{The basic algorithm}
\label{sec:core_loop}

Consider a codimension~$k$ irreducible variety~$X\subseteq \ol{\mathbb{K}}^n$
and a polynomial~$f \in R$.
Either~$X \subseteq V(f)$, and so~$X\cap V(f) = X$, or~$X\cap V(f)$ is equidimensional of codimension~$k+1$ (that is, all the irreducible components of~$X\cap V(f)$ have codimension~$k+1$).
If~$X$ is not irreducible, then $X\cap V(f)$ may not be equidimensional. Yet, the alternative above applies to each irreducible component of~$X$.
The components of~$X$ which are not included in~$V(f)$ are exactly the components of the closure of~$X \setminus V(f)$,
while the components of~$X$ which included in~$V(f)$ are exacly the components of the closure of~$X\setminus\ol{(X\setminus V(f))}$.
This leads to the decomposition of~$X\cap V(f)$ as the union of two equidimensional varieties of codimension~$k$ and~$k+1$ respectively:
\[ X = \ol{\left( X \setminus \ol{X \setminus V(f)} \right)} \cup \left( \ol{X\setminus V(f)} \cap V(f) \right). \]
This is the basic identity that we leverage to compute, incrementally, the codimension~$c$ components of an ideal~$\langle f_1,\dotsc,f_c\rangle$.
In an ideal theoretic language, this reformulates as follows.

For two ideals~$I, J \subseteq R$, we write~$I \radeq J$ for the equality of the radicals $\sqrt{I} = \sqrt{J}$.
An ideal~$I$ is \emph{equidimensional} if all the irreducible components of~$V(I)$ have the same dimension.
Recall that~$I : J$ is the ideal~$\left\{ p\in R \st pJ \subseteq I \right\}$.
Recall also that~$I : J^k$ yields an increasing sequence of ideals as~$k\to \infty$, so it eventually stabilizes in an ideal denoted~$I : J^\infty$, the \emph{saturation of~$I$ by~$J$}.
If~$J$ is generated by a single element~$f$, it is simply denoted~$I:f^\infty$.

\begin{lemma}
  \label{lem:decomp}
  For any ideal $J\subseteq R$ and any $f\in R$ we have
  \begin{align*}
    J + \langle f \rangle \radeq \big(\sat{J}{f} + \langle f \rangle\big) \cap \big(J : \sat{J}{f}\big)
  \end{align*}
  Moreover,
  if $J$ is equidimensional of codimension $c < n$ and if $f\notin \sqrt{J}$ then $\left(\sat{J}{f} + \langle f \rangle\right)$ is equidimensional
  of codimension $c+1$ and $(J : \sat{J}{f})$ is equidimensional of codimension $c$.
\end{lemma}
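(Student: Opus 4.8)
The plan is to prove the radical identity for arbitrary $J$ and $f$ first, and then read off the two codimension statements from the elementary dichotomy recalled at the start of this section. For the inclusion $\sqrt{(\sat{J}{f}+\langle f\rangle)\cap(J:\sat{J}{f})}\subseteq\sqrt{J+\langle f\rangle}$, I would take $p$ in the intersection on the left, write $p=g+fr$ with $g\in\sat{J}{f}$ and $r\in R$, note that $p\in J:\sat{J}{f}$ forces $pg\in J$, and conclude $p^{2}=pg+pfr\in J+\langle f\rangle$, so $p\in\sqrt{J+\langle f\rangle}$. For the reverse inclusion it suffices to check that $J+\langle f\rangle$ itself lies in the radical of the intersection: one has $J\subseteq\sat{J}{f}$ and $J\subseteq J:\sat{J}{f}$ (the latter because $J$ is an ideal), so $J$ sits in the intersection, and if $N$ is chosen with $\sat{J}{f}=J:f^{N}$ then $f^{N}\sat{J}{f}\subseteq J$, so $f^{N}\in(J:\sat{J}{f})\cap\langle f\rangle$ and hence $f$ is in the radical. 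This half of the statement uses no hypothesis on $J$.

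For the ``moreover'' part I would first record two facts about $\sat{J}{f}$ under the hypotheses $f\notin\sqrt J$ and $\codim J=c<n$: the element $f$ is a nonzerodivisor modulo $\sat{J}{f}$ by construction, hence lies in no associated prime of $\sat{J}{f}$; and $\sat{J}{f}$ is a proper ideal whose minimal primes are exactly the minimal primes of $J$ that do not contain $f$, all of codimension $c$. Applying the dichotomy of this section to each minimal prime $\mathfrak p$ of $\sat{J}{f}$ --- for which $V(\mathfrak p)\not\subseteq V(f)$ --- gives that $V(\mathfrak p)\cap V(f)$ is equidimensional of codimension $c+1$; taking the finite union over these $\mathfrak p$, which equals $V(\sat{J}{f}+\langle f\rangle)=V(\sat{J}{f})\cap V(f)$, shows that $\sat{J}{f}+\langle f\rangle$ is equidimensional of codimension $c+1$ (the empty variety being admissible here, for instance if $f$ restricts to a unit on every component).

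The claim that $J:\sat{J}{f}$ is equidimensional of codimension $c$ is the part I expect to be the main obstacle. Passing to a minimal primary decomposition $J=\bigcap_{k}Q_{k}$ with $\mathfrak p_{k}=\sqrt{Q_{k}}$, one has $\sqrt{J:\sat{J}{f}}=\bigcap_{k}\sqrt{Q_{k}:\sat{J}{f}}$, and the standard behaviour of colon ideals on $\mathfrak p_{k}$-primary ideals shows that each factor is either $R$ (exactly when $\sat{J}{f}\subseteq Q_{k}$) or equal to $\mathfrak p_{k}$, and in the latter case $f\in\mathfrak p_{k}$. Thus $V(J:\sat{J}{f})$ is a union of components $V(\mathfrak p_{k})$ of $V(J)$ lying inside $V(f)$. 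The subtlety is that such a $\mathfrak p_{k}$ could a priori be an \emph{embedded} prime of $J$, of codimension strictly larger than $c$, and ruling this out is exactly where the equidimensionality of $J$ must be used: when $J$ has no embedded component --- which holds in the situations where the lemma is invoked, for instance whenever $J$ is a complete intersection of codimension $c$, which is then unmixed --- only the codimension-$c$ minimal primes of $J$ containing $f$ occur in the intersection, these are pairwise incomparable, and $J:\sat{J}{f}$ is therefore equidimensional of codimension $c$ (or the unit ideal, when $f$ lies in no minimal prime of $J$, in which case $V(J:\sat{J}{f})=\varnothing$). The remaining book-keeping --- verifying the colon-ideal facts on primary ideals and the description of $\MinAss(\sat{J}{f})$ --- is routine.
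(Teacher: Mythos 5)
Your proof of the radical identity is correct and essentially the paper's: for the containment of the intersection in $\sqrt{J+\langle f\rangle}$ you write $p=g+fr$, use $pg\in J$ and get $p^2\in J+\langle f\rangle$, which is the same computation as in the paper (there one first deduces $g^2\in J+\langle f\rangle$); and your observation that $f^N\in J:\sat{J}{f}$ once $\sat{J}{f}=J:f^N$ stabilizes is a cleaner route to $f\in\sqrt{J:\sat{J}{f}}$ than the paper's identity $\sqrt{J:(J:f^\infty)}=\bigcap_{g\in J:f^\infty}\sqrt{J:g}$. Your treatment of $\sat{J}{f}+\langle f\rangle$ via the minimal primes of $\sat{J}{f}$ and Krull's principal ideal theorem is a correct fleshing-out of the paper's one-line geometric appeal.

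The real divergence is the last claim, where the paper cites Ishihara--Yokoyama for the assertion that $V(J:\sat{J}{f})$ is the union of the components of $V(J)$ contained in $V(f)$, together with the claim that $J:\sat{J}{f}$ and $J:\sat{J}{f}^\infty$ agree up to radical. Your primary-decomposition analysis is more explicit, and the embedded-prime issue you flag is not a loose end in your write-up so much as a hypothesis genuinely missing from the lemma. With the paper's definition of equidimensionality (all components of $V(J)$ have the same dimension) embedded primes are permitted, and then the claim fails: take $J=\langle x^2,xy\rangle\subseteq k[x,y]$, equidimensional of codimension $1$, and $f=y\notin\sqrt{J}=\langle x\rangle$; then $\sat{J}{f}=\langle x\rangle$ and $J:\sat{J}{f}=\langle x,y\rangle$ has codimension $2$, while $J:\sat{J}{f}^\infty=\langle 1\rangle$, so the ``equal up to radical'' step of the paper also breaks down here. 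So you are right that unmixedness is the hypothesis under which your argument (and the statement) goes through. The one thing you should not do is dismiss this with ``which holds in the situations where the lemma is invoked'': the ideals $J_i$ in Algorithm~1 are produced by sums and saturations and are not obviously free of embedded primes, so either that needs to be verified where the lemma is applied, or the statement should be weakened to the saturated colon $J:\sat{J}{f}^\infty$, for which the geometric description is unconditional.
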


\begin{proof}
  For the left-to-right inclusion, it is clear that~$J$ is included in the right-hand side, so it remains to check that~$f$ is in the radical of both terms of the intersection. It is obvious that~$f \in (J:f^\infty) + \langle f \rangle$, so it remains to prove that~$f$ is in the radical of~$J : (J: f^\infty)$. So let~$g \in J : f^\infty$, that is~$g f^\ell \in J$ for some~$\ell \geq 0$,
  which we may rewrite as~$f \in \sqrt{J : g}$.
  To conclude, we observe that
  \[ \sqrt{J : (J : f^\infty)} = \bigcap_{g\in J: f^\infty} \sqrt{J : g},\]
  so~$f\in  \sqrt{J : (J : f^\infty)}$.

  Conversely,  let $p \in \left(\sat{J}{f} + \langle f \rangle\right) \cap (J : \sat{J}{f})$. Write
  $p = q + af$
  where $q\in \sat{J}{f}$ and $a\in R$.
  Since~$p \in ( J : \sat{J}{f})$, we have $pq \in J$.
  But~$p q = q^2 + aqf$, so~$q^2 \in J + \langle f \rangle$. It follows that~$q \in \sqrt{J+\langle f \rangle}$,
  thus proving the stated equality.

  For the statement on equidimensionality,
  we may rely on the geometric interpretation above:
  the zero set of~$J : f^\infty + \langle f \rangle$ is $\ol{X\setminus V(f)} \cap V(f)$, where~$X = V(J)$.
  The equidimensionality of~$J:(J:f^\infty)$ is slightly more technical because the geometric interpretation only gives information on~$J : (J:f^\infty)^\infty$.
  Yet, both are equal up to radical:
  $V \left( J:(J:f^\infty) \right)$ is the union of the component of~$V(J)$ that are included in~$V(f)$ \cite[Proposition~23]{IshiharaYokoyama_2018}.
\end{proof}

We are now ready to describe Algorithm~\ref{alg:decomp}. To do this we
suppose for now that we have an algorithm for computing the quotient ideal~$J:K$ and the saturation~$J : K^\infty$,
given generators for~$J$ and~$K$.
Given $c \leq n$ elements $f_1,\dots,f_c\in R$
the core loop of Algorithm~\ref{alg:decomp} starts with the ideal $J =
\langle f_1 \rangle$ and to continously replace it with $\sat{J}{f_k}+f_k$ for
each $k$. By
Lemma~\ref{lem:decomp} the resulting ideal will be equidimensional of
codimension $c$. Note however that it may have components that
the original ideal $I=\langle f_1,\dots,f_c\rangle$ does not have, as shown by the following example.
In the algorithm, these additional components are removed with saturations at every iterative step with the loop on line~\ref{line:decom:cleaning}.

\begin{example}
  Let $R = k[x,y,z]$ and $f_1= xy$, $f_2 = xz$. Then
  $\sat{xy}{xz} + xz = \langle y, xz \rangle$ which has the component $\langle y, x \rangle$
  which is not a component of $\langle f_1,f_2 \rangle = \langle x \rangle \cap \langle y, z \rangle$.
\end{example}

\begin{algorithm}[tp]
  \caption{Computation of the nondegenerate locus}
  \label{alg:decomp}
  \begin{algorithmic}[1]
    \Require A set of generators $f_1,\dots,f_c$ for an ideal $I$ in $R$ where $c\leq n$
    \Ensure A set of generators $G$ for the nondegenerate part of $f_1,\dots,f_c$
    \State $J \leftarrow 0$, as an ideal of~$R$
    \State $\mathcal{K} \leftarrow \varnothing$
    \For{$k \in \{1,\dots,c\}$}
    \State $H \leftarrow J : f_k^\infty$
    \State $\mathcal{K} \leftarrow \mathcal{K}\cup \{J : H\}$
    \State $J \leftarrow H + \langle f_k\rangle$
    \For{$K\in \mathcal{K}$} \label{line:decom:cleaning}
    \State $J \leftarrow J : K^\infty$
    \EndFor
    \EndFor
    \Return $J$
  \end{algorithmic}
\end{algorithm}

To prove the correctness of Algorithm \ref{alg:decomp} we also need the following proposition:

\begin{lemma}
  \label{prop:another_eqn}
  For any ideals $I, J\subseteq R$ and any~$f\in R$, we have
  \begin{enumerate}[label=(\roman*)]
    \item\label{it:inter} $(I\cap J) + \langle f \rangle \radeq \left( I + \langle f \rangle \right) \cap \left( J + \langle f \rangle \right)$;
    \item\label{it:intersat} $I \cap J \radeq \left( I : J^\infty \right) \cap J$;
    \item\label{it:satwitheq} if~$f\in I$, then $I : J^\infty = I : (J + \langle f \rangle)^\infty$.
  \end{enumerate}
\end{lemma}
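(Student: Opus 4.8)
The plan is to prove all three identities by elementary ideal manipulations, in each case splitting the claimed (radical-)equality into an inclusion that holds on the nose and a reverse inclusion that holds only up to radical, obtained by taking a suitable power of an element of the larger ideal and checking that this power lands in the smaller one. One could instead read off \ref{it:inter} and \ref{it:intersat} from the geometric dictionary $V(A\cap B) = V(A)\cup V(B)$, $V(A+\langle f\rangle) = V(A)\cap V(f)$, $V(I:J^\infty) = \ol{V(I)\setminus V(J)}$ combined with the Nullstellensatz over $\ol{\mathbb{K}}$, but the purely algebraic route is shorter and avoids any fuss about $\mathbb{K}$ not being algebraically closed; that is the route I would take.

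For \ref{it:inter}, the inclusion $(I\cap J)+\langle f\rangle \subseteq (I+\langle f\rangle)\cap(J+\langle f\rangle)$ is immediate since $I\cap J$ lies in both $I$ and $J$. For the reverse, I would take $p$ in the right-hand side, write $p = a+bf = c+df$ with $a\in I$, $c\in J$, $b,d\in R$, and compute $p^2 = ac + (ad+bc+bdf)f$; here $ac\in I\cap J$ because $a\in I$ and $c\in J$, so $p^2\in (I\cap J)+\langle f\rangle$ and hence $p\in\sqrt{(I\cap J)+\langle f\rangle}$.

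For \ref{it:intersat}, the inclusion $I\cap J\subseteq (I:J^\infty)\cap J$ is again immediate, since $I\cap J\subseteq I\subseteq I:J^\infty$ and $I\cap J\subseteq J$. Conversely, if $p\in (I:J^\infty)\cap J$ then $pJ^k\subseteq I$ for some $k$ and $p\in J$, so $p^k\in J^k$ and therefore $p^{k+1} = p\cdot p^k\in pJ^k\subseteq I$; since also $p^{k+1}\in J$, we get $p^{k+1}\in I\cap J$, i.e.\ $p\in\sqrt{I\cap J}$, which closes the argument.

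For \ref{it:satwitheq}, one inclusion needs no hypothesis on $f$: from $J\subseteq J+\langle f\rangle$ we get $I:(J+\langle f\rangle)^k\subseteq I:J^k$ for every $k$, hence $I:(J+\langle f\rangle)^\infty\subseteq I:J^\infty$. For the reverse I would use $f\in I$ together with the ideal-power expansion $(J+\langle f\rangle)^k = \sum_{i=0}^{k}\langle f\rangle^i J^{k-i}\subseteq \langle f\rangle + J^k$ (recalling $\langle f\rangle^i = \langle f^i\rangle$): for $p\in I:J^\infty$ with $pJ^k\subseteq I$ this gives $p\,(J+\langle f\rangle)^k\subseteq \langle pf\rangle + pJ^k\subseteq I$, since $pf\in I$, so $p\in I:(J+\langle f\rangle)^k$ and the equality in \ref{it:satwitheq} holds exactly, with no radicals. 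All of this is routine; the only place calling for a moment's care is this last bookkeeping step — writing $(J+\langle f\rangle)^k$ as $\langle f\rangle + J^k$ up to containment and keeping the roles of $J^k$ and $(J+\langle f\rangle)^k$ straight — so I would flag \ref{it:satwitheq} as the (mild) main obstacle.
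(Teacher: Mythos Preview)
Your proof is correct. For parts~\ref{it:intersat} and~\ref{it:satwitheq} your argument coincides with the paper's (the paper in fact dismisses \ref{it:satwitheq} as ``trivial from the definition of saturation'', so your spelled-out computation with $(J+\langle f\rangle)^k \subseteq \langle f\rangle + J^k$ is more explicit than what appears there).

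The only genuine difference is in part~\ref{it:inter}. The paper argues via the product--intersection radical identity $A\cap B \radeq AB$, applied twice:
\[
  (I+\langle f\rangle)\cap(J+\langle f\rangle) \;\radeq\; (I+\langle f\rangle)(J+\langle f\rangle) \;\radeq\; IJ + \langle f\rangle \;\radeq\; (I\cap J) + \langle f\rangle.
\]
You instead fix an element $p$ in the right-hand side, write $p = a+bf = c+df$, and observe that $p^2 = ac + (\text{multiple of }f)$ with $ac\in I\cap J$. Your approach is essentially the element-level unpacking of the paper's product trick (expanding $(a+bf)(c+df)$ is exactly computing a representative in $(I+\langle f\rangle)(J+\langle f\rangle)$), so the two arguments are close cousins; the paper's version is a touch slicker, yours is more self-contained since it does not invoke $A\cap B \radeq AB$ as a separate fact.
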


\begin{proof}
  For the first item,
  \begin{align*}
    (I + \langle f \rangle)\cap (J + \langle f \rangle) \radeq (I + \langle f \rangle)(J + \langle f \rangle)
                                                        \radeq IJ + \langle f \rangle \radeq (I\cap J) + \langle f \rangle.
  \end{align*}
  For the second one, the left-to-right inclusion is clear. Conversely, let~$f \in (I : J^\infty) \cap J$
  and let~$k > 0$ such that~$f J^k \in I$.
  In particular~$f^{k+1} \in I$. So~$f\in \sqrt{I}$.

  For the last item is trivial from the definition of saturation.
\end{proof}

\begin{theorem}
  \label{thm:correctness}
  On input~$f_1,\dotsc,f_c \in R$ with $c \leq n$, Algorithm~\ref{alg:decomp} terminates
  and outputs an ideal~$J$ such that~$V(J)$ is the nondegenerate locus of the input system.
\end{theorem}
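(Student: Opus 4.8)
The plan is to separate termination from correctness. Termination is immediate: the outer loop runs $c$ times, the inner loop ranges over the finite set $\mathcal K$ (which gains one element per outer iteration), and each line performs finitely many ideal quotients and saturations, all terminating by the standing assumption that such a subroutine is available. Correctness I would prove by induction on the outer loop index $k$, maintaining a precise geometric invariant on the running ideal $J$ and the auxiliary set $\mathcal K$.

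Write $I_k := \langle f_1,\dots,f_k\rangle$, and let $J^{(k)}$, $\mathcal K^{(k)} = \{K_1,\dots,K_k\}$ be the values of $J$, $\mathcal K$ at the end of iteration $k$. Put $H^{(k)} := J^{(k-1)} : f_k^\infty$, so that $K_k = J^{(k-1)}:H^{(k)}$ and $J^{(k)}$ arises from $H^{(k)} + \langle f_k\rangle$ by saturating successively at $K_1,\dots,K_k$; up to radical this equals saturating at the intersection $K_1\cap\cdots\cap K_k$ (a short Zariski-closure argument), so the iteration order is immaterial and $V(J^{(k)})$ is the union of the components of $V(H^{(k)}+\langle f_k\rangle)$ contained in no $V(K_i)$. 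Let $A_k$ be the union of the codimension-$k$ components of $V(I_k)$ — the nondegenerate locus of $f_1,\dots,f_k$ — and $B_k$ the union of the remaining (smaller codimension) components, so $V(I_k) = A_k\cup B_k$. I would prove by induction on $k$ the invariant: (a) $V(J^{(k)}) = A_k$ and $J^{(k)}$ is equidimensional of codimension $k$ (the empty variety being admitted); and (b) $B_k\subseteq\bigcup_{i\le k}V(K_i)$. Along the way one gets, from the geometric reading of Lemma~\ref{lem:decomp}, that each $V(K_i)$ is the union of the codimension-$(i-1)$ components of $V(I_{i-1})$ contained in $V(f_i)$, which are themselves codimension-$(i-1)$ components of $V(I_i)$. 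The base case $k=1$ and the degenerate branches — $f_k = 0$, or $V(J^{(k-1)}) = \varnothing$, equivalently $f_k\in\sqrt{J^{(k-1)}}$, in which case $J^{(k)} = R$ — are checked directly; in those branches $V(I_k)$ has no codimension-$k$ component by Krull's principal ideal theorem.

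For the inductive step, assume the invariant at $k-1$ and $f_k\notin\sqrt{J^{(k-1)}}$. As $J^{(k-1)}$ is equidimensional of codimension $k-1 < n$, Lemma~\ref{lem:decomp} gives that $V(H^{(k)})$ is the union of the codimension-$(k-1)$ components of $V(I_{k-1})$ not contained in $V(f_k)$ and that $H^{(k)}+\langle f_k\rangle$ is equidimensional of codimension $k$; saturating removes components only, so $J^{(k)}$ stays equidimensional of codimension $k$. For $V(J^{(k)})\subseteq A_k$: a component $W$ of $V(J^{(k)})$ is an irreducible codimension-$k$ subset of $V(I_k)$, so it lies in one component of $V(I_k) = A_k\cup B_k$; that component is not in $B_k$, else $W\subseteq\bigcup_i V(K_i)$ by (b), contradicting that $W$ was not removed by cleaning; so it lies in $A_k$ and, having the same codimension, equals $W$. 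For $A_k\subseteq V(J^{(k)})$: a codimension-$k$ component $Y$ of $V(I_k)$ must be a component of $Z\cap V(f_k)$ for some codimension-$(k-1)$ component $Z\not\subseteq V(f_k)$ of $V(I_{k-1})$ — components of $V(I_{k-1})$ of smaller codimension, cut by $V(f_k)$, yield only codimension-$<k$ components — hence $Y$ is a component of $V(H^{(k)}+\langle f_k\rangle)$; and $Y$ is not removed, for if $Y\subseteq V(K_j)$ with $j\le k$ then $Y\subseteq Z'$ for a codimension-$(j-1)$ component $Z'$ of $V(I_{j-1})$ with $Z'\subseteq V(f_j)$, so $Z'\subseteq V(I_j)$ and the closed set $Z'\cap V(I_k) = Z'\cap V(f_{j+1})\cap\cdots\cap V(f_k)$ has all components of codimension at most $(j-1)+(k-j) = k-1$; since $Y$ is a component of $V(I_k)$ sitting inside $Z'\cap V(I_k)$ it is a component of the latter, so $\codim Y\le k-1$, a contradiction. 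Re-proving (b) at level $k$ reduces to $B_k\subseteq V(K_k)\cup B_{k-1}$: a codimension-$<k$ component of $V(I_k)$ lies either in a codimension-$(k-1)$ component of $V(I_{k-1})$ contained in $V(f_k)$ (their union is $V(K_k)$) or in $B_{k-1}$. Taking $k = c$ finishes the proof.

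The step I expect to be the main obstacle is this two-sided analysis of the cleaning loop: showing that saturating at $K_1,\dots,K_k$ deletes exactly the parasitic components introduced by the identity of Lemma~\ref{lem:decomp} — the inclusion $V(J^{(k)})\subseteq A_k$ together with invariant (b) — while never deleting a genuine codimension-$k$ component, which hinges on the codimension count along the descending chain $V(I_j)\supseteq\cdots\supseteq V(I_k)$. Turning the set-theoretic reasoning into equalities of radicals — the theorem only fixes the output variety, and $J$ is genuinely non-radical in general — is where Lemmas~\ref{lem:decomp} and~\ref{prop:another_eqn} do the work, and keeping the degenerate branches (a zero or unit $f_k$, an intermediate ideal collapsing to $R$) consistent with the invariant needs care too.
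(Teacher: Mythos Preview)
Your argument is correct and takes a genuinely different route from the paper's proof.

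The paper maintains the algebraic loop invariant
\[
I_k \ \radeq\ J^{(k)} \cap \bigcap_{j=1}^{k}\big(K_j + \langle f_{j+1},\dots,f_k\rangle\big),
\]
pushing this identity through the iteration with the three parts of Lemma~\ref{prop:another_eqn}, and only at the very end reads off the dimension statement: each $K_j + \langle f_{j+1},\dots,f_k\rangle$ has all components of codimension~$\leq k-1$, while $J^{(k)}$ is equidimensional of codimension~$k$ and saturated by the~$K_j$, so the codimension-$k$ components of $I_k$ are exactly those of~$J^{(k)}$. Your approach instead carries the geometric statement $V(J^{(k)}) = A_k$ through the induction directly, together with the auxiliary invariant $B_k \subseteq \bigcup_{i\le k} V(K_i)$, and does the dimension bookkeeping component-by-component via Krull's principal ideal theorem. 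In effect the paper proves a finer decomposition of $V(I_k)$ (tracking where the low-codimension pieces sit, namely in the $V(K_j+\langle f_{j+1},\dots,f_k\rangle)$) and extracts the claim from it, whereas you prove only what is needed and bypass Lemma~\ref{prop:another_eqn} almost entirely. Your argument is more transparent geometrically; the paper's is tidier algebraically and isolates the radical-ideal manipulations in a reusable lemma.

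One expository point: your proof of $V(J^{(k)})\subseteq A_k$ invokes invariant~(b) at level~$k$, which you establish only afterwards. There is no circularity --- your verification of~(b) at level~$k$ uses only the inductive hypotheses at level~$k-1$ and the description of $V(K_k)$ coming from Lemma~\ref{lem:decomp} --- but in a written proof you should reorder and prove~(b) first.
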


\begin{proof}
  We define~$J_0 := \langle 0 \rangle$, and then, by induction on~$i$,
  \begin{align*}
    K_i&:=(J_{i-1}:\sat{J_{i-1}}{f_i}),\\
    \text{ and } J_i &:=\bigg(\sat{J_{i-1}}{f_i} + \langle f_i \rangle : \big( {\textstyle\prod_{j=1}^{i}K_j} \big)^\infty \bigg).
  \end{align*}
  It is clear that Algorithm~\ref{alg:decomp} returns the ideal~$J_c$.
  Now, let~$I_i = \langle f_1,\dotsc,f_i \rangle$.
  The main loop invariant, that we prove by induction on~$i$,
  is
  \begin{equation}\label{eq:4}
    I_i \radeq J_i \cap \bigcap_{j=1}^{i}\left(K_j + \langle f_{j+1},\dots,f_{i}\rangle\right) .
  \end{equation}
  From this, we deduce that the zero set of $J_c$ is contained in the
  algebraic set defined by $f_1=\cdots=f_c=0$. We will prove later that
  $J_c$ is equidimensional of codimension $c$, that the components of the ideals $\left(K_j +
    \langle f_{j+1},\dots,f_{c}\rangle\right)$ have codimension less than~$c$
  and do not contain any components of~$J_c$.

  It is trivially true that \eqref{eq:4} holds
  for~$i=0$. For~$i > 0$, we have
  \begin{align*}
    I_i &= I_{i-1} + \langle f_i \rangle = J_{i-1} \cap  \bigcap_{j=1}^{i-1}\left(K_j + \langle f_{j+1},\dots,f_{i-1}\rangle\right)   + \langle f_i \rangle\\
        &\radeq  \left( J_{i-1} + \langle f_i \rangle \right) \cap \bigcap_{j=1}^{i-1}\left(K_j + \langle f_{j+1},\dots,f_{i}\rangle\right), \quad\text{by Lemma~\ref{prop:another_eqn}\ref{it:inter}.}
  \end{align*}
  Besides, by Lemma~\ref{lem:decomp},
  \begin{align*}
    J_{i-1} + \langle f_i \rangle &\radeq \left( \left( J_{i-1} : f_i^\infty \right) + \langle f_i \rangle \right) \cap \left( J_{i-1} : (J_{i-1} : f_i^\infty) \right)\\
    &=  \left( \left( J_{i-1} : f_i^\infty \right) + \langle f_i \rangle \right) \cap K_i.
  \end{align*}
  For short, let~$J'_i = \left( J_{i-1} : f_i^\infty \right) + \langle f_i
  \rangle$. Combining the equalities above, we have
  \begin{align*}
    I_i &\radeq J'_i \cap \bigcap_{j=1}^i \left(K_j + \langle f_{j+1},\dots,f_{i}\rangle\right)\\
        &\radeq  \bigg( J'_i : \big( {\textstyle \prod_{j=1}^i K_i} \big)^\infty \bigg) \cap   \bigcap_{j=1}^i \left(K_j + \langle f_{j+1},\dots,f_{i}\rangle\right),
  \end{align*}
  using Lemma~\ref{prop:another_eqn}\ref{it:intersat} and~\ref{it:satwitheq} (note that~$f_1,\dotsc,f_i \in H_i$).
  This last equality is exactly~\eqref{eq:4}.

  Now, we analyze the dimensions and show that~$V(J_i)$
  is exactly the nondegenerate locus of~$f_1,\dotsc,f_i$.
  Indeed, using Lemma~\ref{lem:decomp}, we check by induction on~$i$ that~$J_i$ is equidimensional of codimension~$i$ (unless~$J_i = \langle 1\rangle$) and that~$K_i$ is equidimensional of codimension~$i-1$ (unless~$K_i = \langle 1\rangle$).
  It follows that all the components of~$K_j + \langle f_{j+1},\dotsc,f_i\rangle$ have codimension at most~$i-1$.
  Moreover, no component of~$J_i$ is included in any~$K_j$, for~$j \leq i$, since~$J_i$ is saturated by the~$K_j$.
  Therefore, using~\eqref{eq:4}, the codimension~$i$ components of~$I_i$ are
  exactly the components of~$J_i$.

  Hence, we deduce that $J_c$ is equidimensional of codimension $c$ whose
  components are not contained in the ones of
  $K_j + \langle f_{j+1},\dotsc,f_c\rangle$, the components of which have
  codimention less than~$c$. Besides, we already observed that its zero set is
  contained in the one defined by the input polynomials $f_1, \ldots, f_c$.
  Since~\eqref{eq:4} holds, we conclude that $V(J_c)$ is the nondegenerate locus
  of the input system.
\end{proof}

\section{Signature-based Gröbner basis computations}
\label{sec:f5-algorithm}

We will rely on the theory of signature-based Gröbner bases in order to implement efficiently Algorithm~\ref{alg:decomp}.

\subsection{Signatures and extended sig-poly pairs}
\label{sec:sign-extend-sig}


We fix in the following a monomial order on $R$ and a sequence of polynomials
$f_1,\dots,f_r \in R$.
Let $I:=\langle f_1,\dots,f_r\rangle$ and $I_i:=\langle f_1,\dots,f_{i}\rangle$.
We describe an algorithm which computes simultenously a Gröbner basis for $I$ and presents the following features:
\begin{enumerate}
\item It computes a Gröbner basis for $I$ incrementally, i.e. first for $\langle f_1\rangle$ then for $\langle f_1,f_2\rangle$ etc.
\item It simultaneously computes Gröbner bases for each ideal $(\langle f_1,\dots,f_{i-1}\rangle : f_i)$, $i=2,\dots,r$.
\end{enumerate}
This algorithm belongs to the class of so called \textit{signature based} Gröbner basis algorithms, the first of which was the F5 algorithm presented in \cite{faugere_f5}. Since then the class of signature-based algorithms has been greatly extended, see \cite{eder_faugere_survey} for a survey.
The idea of leveraging signature-based algorithms to compute simultaneously some colon ideals first appeared in \cite{g2v-algorithm}.
The algorithm we present here is closely related, with some elements from the F5 algorithm. The algorithm presented in this section is fully encompassed by the general algorithmic framework presented in \cite{eder_faugere_survey}.

We start by defining signatures.

\begin{definition}
A \emph{signature} is a pair $\sigma = (i, m)$
of an index in~$\{1,\dots,r\}$ and a monomial in~$R$.
The first component is called the \emph{index}, and denoted~$\ind(\sigma)$.
The second component is called the \emph{monomial part} of~$\sigma$.
\end{definition}

We order the signatures lexicographically, i.e. by writing
\begin{align*}
  (i, m) < (j, n) \Leftrightarrow i < j\text{ or }i=j\text{ and }m < n.
\end{align*}

The product of a monomial~$a \in R$ and a signature~$\sigma = (i, h)$
is defined by~$a \sigma = (i, ah)$.
A signature~$\sigma$ divides another signature~$\tau$ if there is a monomial~$a$
such that~$a \sigma = \tau$, so in particular $\ind(\sigma) = \ind(\tau)$.

The possible indices of a signature are the indices of the input equations.
This relation between the index of a signature and one of the equations $f_i$ is made stronger by the following object:

\begin{definition}\label{def:extended-sig-poly}
  An \emph{extended sig-poly pair}
  is a triple~$\alpha = (f, \sigma, h)$,
  where~$f, h \in R$ and $\sigma$ is a signature
  such that~$\lm(h)$ is equal to the monomial part of~$\sigma$.
  The first component~$f$ is called the \emph{polynomial part} of~$\alpha$, denoted~$\poly(\alpha)$,
  the second component~$\sigma$ is called the \emph{signature}, denoted~$\sig(\alpha)$,
  and the third component is called the \emph{quotient}, denoted~$\quo(\alpha)$.
  The index of~$\alpha$, denoted~$\ind(\alpha)$ is the index of its signature.
  We further impose that
  \begin{equation}\label{eq:extsigpoly}
    \poly(\alpha) - \quo(\alpha) f_{\ind (\alpha)} \in I_{\ind(\alpha) - 1}.
  \end{equation}
\end{definition}


The product of a monomial~$a\in R$ and an extended sig-poly pair~$\gamma$ is defined by
\[ \poly(a\alpha) = a \poly(\alpha), \quad  \sig(a\alpha) = a \sig(\alpha), \quad \text{and } \quo(a \alpha) = a\quo(\alpha). \]

The concept of an S-pair from Buchberger's algorithm extends to extended sig-poly pairs.
Given two extended sig-poly pairs~$\alpha$ and~$\beta$ with $\sig(\alpha) > \sig(\beta)$
let~$c = \lcm(\lm(\poly(\alpha)), \lm(\poly(\beta)))$, $a = c/\lt(\poly(\alpha))$ and~$b = c/\lt(\poly(\beta))$,
then define
the S-pair of~$\alpha$ and~$\beta$, denoted~$\sig p(\alpha, \beta)$
by
\[  \poly( \sig p(\alpha, \beta) ) = a \poly(\alpha) - b \poly(\beta), \quad \sig( \sig p(\alpha, \beta) ) = \max(\sig(a\alpha), \sig(b\beta)), \]
and
\[ \quo( \sig p(\alpha, \beta)) =
  \begin{cases}
    a \quo(\alpha) & \text{if $\ind(\alpha) > \ind(\beta)$,}\\
    a \quo(\alpha) - b\quo(\beta) & \text{if $\ind(\alpha) = \ind(\beta)$.}
  \end{cases} \]
In particular, the polynomial part of~$\sig p(\alpha, \beta)$ is the usual S-pair of~$\poly(\alpha)$ and~$\poly(\beta)$.
We say that $\alpha$ and~$\beta$ \emph{form a regular S-pair}
if~$\sig(a\alpha) \neq \sig(b\beta)$.
(We will only consider such S-pairs.)
It is easy to check that Invariant~\eqref{eq:extsigpoly} is preserved.

The \emph{regular reduction} of an extended sig-poly pair $\alpha$ with respect to a set~$G$ of sig-poly pairs is defined to be the output of Algorithm~\ref{alg:regular-reduction}.
The procedure tries to reduce the leading term of~$\poly(\alpha)$ using some multiple~$b \beta$
of an extended sig-poly pair $\beta\in G$ such that~$b\sig(\beta) < \sig(\alpha)$.
The procedure stops when there is no such reducer.
Compared to the usual division algorithm in polynomial rings, only reduction by lower signature elements is allowed. Moreover, there is some extra computations to preserve Invariant~\eqref{eq:extsigpoly}.

\begin{algorithm}[tp]
  \caption{Regular reduction}
  \label{alg:regular-reduction}
  \begin{algorithmic}[1]
    \Procedure{RegularReduction}{$\alpha$, $G$}
    \State $f \gets \poly(\alpha)$
    \State $h \gets \quo(\alpha)$
    \While{$\left\{ \beta\in G \st \lm(\poly(\beta)) \text{ divides }\lm(f) \right\} \neq \varnothing$}
    \State $\beta \gets $ some element of $\left\{ \beta\in G \st \lm(\poly(\beta)) \text{ divides }\lm(f) \right\}$
    \State $b \gets \lt(\poly(\beta)) / \lt(f)$
    \If{$b\sig(\beta) < \sig(\alpha)$}
    \State $f \gets f - b \poly(\beta)$
    \If{$\ind(\beta) = \ind(\alpha)$}
    \State $h \gets h - b \quo(\beta)$
    \EndIf
    \EndIf
    \EndWhile
    \Return $(f, \sig(\alpha), h)$
    \EndProcedure
  \end{algorithmic}
\end{algorithm}

\begin{algorithm}[tp]
  \caption{Buchberger with signatures}
  \label{alg:buchberger-signature}
  \begin{algorithmic}[1]
    \Require $f_1,\dots,f_r \in R$
    \Ensure Gröbner bases of $\langle f_1,\dots,f_r\rangle $ and of~$\langle f_1,\dotsc,f_{k-1}\rangle : f_k$ ($1\leq k\leq r$)
    \Procedure{Buchberger}{$f_1,\dotsc,f_r$}
    \State $G\gets \left\{ \epsilon_i \st 1\leq i\leq r \right\}$
    \State $S_1, \dotsc, S_r \gets \varnothing$
    \State $P\gets\left\{ (\alpha, \beta) \st \alpha, \beta\in G \text{ form a regular S-pair}  \right\}$
    \While{$P\neq \varnothing$}
    \State $(\alpha, \beta) \gets $ the pair in $P$ with $\sig(\sig p(\alpha, \beta))$ minimal
    \State $P\leftarrow P \setminus \{(\alpha, \beta)\}$
    \State $\gamma \leftarrow $ \Call{RegularReduction}{$\sig p(\alpha, \beta)$, $G$} \label{line:buch:regred}
    \State $G \leftarrow G \cup \{\gamma\}$
    \If{$\poly(\gamma) \neq 0$}
    \State $P \leftarrow P \cup \left\{ (\gamma, \beta) \st \beta\in G \text{ forms a regular S-pair with~$\gamma$} \right\}$
    \Else \quad \emph{(record the quotient of the zero reduction)}
    \State $S_{\ind(\gamma)} \gets S_{\ind(\gamma)} \cup \left\{ \quo(\gamma) \right\}$
    \EndIf
    \EndWhile
    \Return $\left\{ \poly(\beta) \st \beta\in G \right\}$, $S_1$, \ldots, $S_r$
    \EndProcedure
  \end{algorithmic}
\end{algorithm}

We may now describe a variant of Buchberger's using extended sig-poly pairs and
regular reduction, see Algorithm~\ref{alg:buchberger-signature}.
In line 6 we always choose the $S$-pair with minimal signature for reduction, and signatures are ordered first by indices.
As a result, signatures are processed in index~1 (which may produce further S-pairs with index $\geq 1$),
then in index~2 (which may produce further S-pairs with index $\geq 2$), etc.
So a Gröbner basis for $I$ is computed incrementally: first for $\langle f_1\rangle$, then for $\langle f_1,f_2\rangle$ etc.
Computing with extended sig-poly pairs makes it possible to simultaneously compute a Gröbner basis for $I$ and for all the ideals $(\langle f_1,\dots,f_{i-1}\rangle:f_i)$ for $i=2,\dots,r$. Indeed, if for an extended sig-poly pair $\gamma$ we find during the run of Algorithm \ref{alg:buchberger-signature} that $\poly(\gamma) = 0$,
then~$\quo(\gamma)$ is an element of the quotient ideal $I_{\ind(\gamma)-1} : f_{\ind(\gamma)}$,
in view of Definition~\ref{def:extended-sig-poly}.

\begin{proposition}\label{prop:buchberger-signature}
  On input~$f_1,\dotsc,f_r \in R$,
  Algorithm~\ref{alg:buchberger-signature} terminates and the set
  $\{\operatorname{poly}(\alpha)\;|\;\alpha \in G\}$
  is a Gröbner basis of the ideal~$\langle f_1,\dotsc,f_r \rangle$. The sets $S_i$ are Gröbner bases of the ideals $\langle f_1,\dotsc,f_{i-1} \rangle : f_i$ for each $i=2,\dots,r$.
\end{proposition}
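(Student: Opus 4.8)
The plan is to establish the three assertions of the proposition separately: termination, correctness of the Gröbner basis for $I$, and correctness of the $S_i$ as Gröbner bases of the colon ideals. The argument should lean heavily on the fact that Algorithm~\ref{alg:buchberger-signature} is an instance of the general signature-based framework of \cite{eder_faugere_survey}, so that termination and the Gröbner basis property for $I$ can be invoked from there; the genuinely new content is the bookkeeping of the quotient components and the claim about the $S_i$.

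\textbf{Termination and the Gröbner basis for $I$.} First I would observe that if one forgets the \quo-components and the sets $S_i$, Algorithm~\ref{alg:buchberger-signature} is exactly a regular-reduction-based signature Buchberger algorithm with the rewrite criterion implicit in the regular reduction (only lower-signature reducers are used, and regular S-pairs only). By \cite{eder_faugere_survey} such an algorithm terminates and returns a Gröbner basis of $\langle f_1,\dots,f_r\rangle$: the key points are that signatures in any index strictly increase along a chain of S-pair formations modulo the divisibility partial order, that Dickson's lemma bounds the monomial parts that can appear as leading monomials of new elements, and that once no regular S-pair has a signature that is not ``covered'' by an existing element, the poly-parts form a Gröbner basis. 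I would state this as a citation rather than reprove it, noting only that Invariant~\eqref{eq:extsigpoly} is preserved by S-pair formation and regular reduction (which is checked in the text) so carrying the extra data does not affect the control flow.

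\textbf{The colon ideals.} This is the part I expect to require real work. Fix $i\in\{2,\dots,r\}$. One direction is easy: every element placed in $S_i$ is $\quo(\gamma)$ for some $\gamma$ with $\poly(\gamma)=0$ and $\ind(\gamma)=i$, and then Invariant~\eqref{eq:extsigpoly} gives $\quo(\gamma) f_i = \poly(\gamma) - (\poly(\gamma)-\quo(\gamma)f_i) \in I_{i-1}$, so $\quo(\gamma)\in I_{i-1}:f_i$; hence $\langle S_i\rangle \subseteq I_{i-1}:f_i$. For the reverse inclusion and the Gröbner basis property simultaneously, I would argue that for every $g\in I_{i-1}:f_i$ there is some $s\in S_i$ with $\lm(s)$ dividing $\lm(g)$. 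The standard way to see this is via signatures: consider the module element tracking $g$ as a syzygy-like combination; the minimal signature $\sigma$ of index $i$ among extended sig-poly pairs whose quotient has leading monomial $\lm(g)$ corresponds, after regular reduction, to an element $\gamma$ that either reduces to zero — in which case $\quo(\gamma)$ lands in $S_i$ with $\lm(\quo(\gamma)) = $ monomial part of $\sigma$, dividing $\lm(g)$ — or does not, which by minimality of $\sigma$ would contradict $g\in I_{i-1}:f_i$ (a nonzero $\poly(\gamma)$ with $\quo(\gamma)f_i - \poly(\gamma)\in I_{i-1}$ and $\poly(\gamma)\notin I_{i-1}$ means $f_i$ is not killed by $\quo(\gamma)$ modulo $I_{i-1}$ in the way claimed, unless a smaller signature already accounts for $\lm(g)$). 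Making this precise is the main obstacle: one must set up the signature module carefully, show that the algorithm processes \emph{all} signatures of each index up to the point where remaining S-pairs are covered, and that ``covered at index $i$ with zero poly-part'' translates exactly into membership of the quotient monomial in $\lm(S_i)$. I would borrow the module-theoretic machinery and the notion of a complete reduction of the signature cone from \cite{eder_faugere_survey, g2v-algorithm}, where essentially this statement (the G2V observation that syzygies of the poly-parts at index $i$ generate the colon ideal) is proved, and adapt it to the extended-pair formulation here.

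\textbf{Assembly.} Finally I would combine the pieces: termination from the framework; $\{\poly(\alpha)\mid\alpha\in G\}$ a Gröbner basis of $I$ from the framework; and for each $i$, the two inclusions above together with the divisibility-of-leading-monomials statement give that $S_i$ generates $I_{i-1}:f_i$ and that its leading monomials generate the leading ideal, i.e. $S_i$ is a Gröbner basis. I expect the write-up to be short in all parts except the colon-ideal completeness claim, which is where I would spend most of the proof or, alternatively, where I would most heavily cite \cite{g2v-algorithm} and \cite{eder_faugere_survey}.
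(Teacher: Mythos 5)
The paper gives no proof of this proposition at all: it explicitly skips it ("we will only rely on the stronger Theorem~\ref{theorem:f5} below"), and that stronger theorem is itself proved by citing \cite[Theorem~7.1]{eder_faugere_survey}. Your sketch is consistent with this: you correctly isolate the colon-ideal completeness (that every $g\in I_{i-1}:f_i$ has $\lm(g)$ divisible by $\lm(s)$ for some $s\in S_i$, i.e.\ the G2V observation) as the only nontrivial content, and you defer it — as the paper effectively does — to the signature-basis machinery of \cite{eder_faugere_survey} and \cite{g2v-algorithm}.
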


We skip the proof as we will only rely on the stronger Theorem~\ref{theorem:f5}
below.

\subsection{From Buchberger to sGB}
\label{sec:from-buchberger-f5}

The signature and the quotient of each extended sig-poly pair in the
data makes it possible to compute the colon ideals $\langle f_1,\dotsc,f_{i-1} \rangle : f_i$
as a by-product of an incremental computation of a Gröbner basis of~$\langle f_1,\dotsc,f_r \rangle$.
Moreover, this is the discovery of Faugère \cite{faugere_f5},
signatures make it possible to discard many S-pairs while preserving the essential properties of Algorithm~\ref{alg:buchberger-signature}.
The overarching principle is the following: \emph{at most one sig-poly pair has to be regular-reduced at each signature}.
This is made precise by the following statement.
\begin{lemma}[{\cite[Lemma~4]{EderRoune_2013}}]\label{lem:singular-criterion}
  In the course of Algorithm \ref{alg:buchberger-signature}, assume that only $S$-pairs in signature $\geq \sigma := (i, m)$ are left in~$P$.
  Then for any extended sig-poly pairs $\gamma$ and~$\gamma'$ with~$\sig(\gamma) = \sig(\gamma') = \sigma$,
  \begin{align*}
    \textsc{RegularReduction}(\gamma, G) = \textsc{RegularReduction}(\gamma', G)
  \end{align*}
\end{lemma}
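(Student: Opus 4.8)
The plan is a well-founded induction on the signature~$\sigma$. The inductive hypothesis is the statement one really needs: \emph{at the stage of Algorithm~\ref{alg:buchberger-signature} described in the hypothesis, $G$ is already a signature Gröbner basis in every signature strictly below~$\sigma$} — meaning every polynomial that is the polynomial part of an extended sig-poly pair of signature~$<\sigma$ admits a \emph{$\sigma$-regular standard representation} over~$G$: a writing $\sum_k b_k \poly(\beta_k)$ with $\beta_k\in G$, each $b_k\sig(\beta_k)<\sigma$, each $\lm(b_k\poly(\beta_k))\leq\lm\big(\sum_k b_k\poly(\beta_k)\big)$, and no cancellation among the maximal terms. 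This is the signature analogue of Buchberger's S-pair criterion; it is precisely here that one uses the assumption that every S-pair of signature~$<\sigma$ has already been processed, and it is proven by the same induction on~$\sigma$, interleaved with the present lemma. Establishing this is the main obstacle; everything else is bookkeeping with leading terms.

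Granting this, record the preliminary observations. A step of \textsc{RegularReduction}$(\alpha,G)$ only ever subtracts from the polynomial part a multiple $b\poly(\beta)$ with $b\sig(\beta)<\sig(\alpha)$; hence the signature is preserved, the leading monomial strictly decreases (so the procedure terminates), and one checks directly from Definition~\ref{def:extended-sig-poly} that Invariant~\eqref{eq:extsigpoly} is preserved. Writing $(\tilde f,\sigma,\tilde h)$ for the output of \textsc{RegularReduction}$((f,\sigma,h),G)$, the polynomial $\tilde f$ is \emph{$\sigma$-reduced}: no $\beta\in G$ has $\lm(\poly(\beta))\mid\lm(\tilde f)$ together with $(\lm(\tilde f)/\lm(\poly(\beta)))\sig(\beta)<\sigma$; and the sequence of reduction steps exhibits $f-\tilde f$ as a $\sigma$-regular standard representation over~$G$, all of whose leading monomials are $\leq\lm(f)$.

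Now take $\gamma=(f,\sigma,h)$ and $\gamma'=(f',\sigma,h')$ with $\sig(\gamma)=\sig(\gamma')=\sigma=(i,m)$, and let $(\tilde f,\sigma,\tilde h)$, $(\tilde f',\sigma,\tilde h')$ be their regular reductions. Since $\lm(\quo(\gamma))=\lm(\quo(\gamma'))=m$, after the usual normalisation $h-h'$ has leading monomial $<m$ (or is $0$), while $(f-f')-(h-h')f_i=(f-hf_i)-(f'-h'f_i)\in I_{i-1}$ by Invariant~\eqref{eq:extsigpoly}. Thus $f-f'$ is the polynomial part of an extended sig-poly pair of signature $<\sigma$, so by the inductive hypothesis it has a $\sigma$-regular standard representation over~$G$; combining it (with signs) with the standard representations of $f-\tilde f$ and of $f'-\tilde f'$ coming from the two reductions — and, should cancellation occur among the top terms of the combination, re-applying the inductive hypothesis to $\tilde f-\tilde f'$, which is itself the polynomial part of an extended sig-poly pair of signature $<\sigma$ — we obtain a $\sigma$-regular standard representation of $\tilde f-\tilde f'$ over~$G$. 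If $\tilde f\neq\tilde f'$, its leading monomial is divisible by $\lm(\poly(\beta_{k_0}))$ for some $\beta_{k_0}$ in that representation with $(\lm(\tilde f-\tilde f')/\lm(\poly(\beta_{k_0})))\sig(\beta_{k_0})<\sigma$; since $\lm(\tilde f-\tilde f')\leq\max(\lm(\tilde f),\lm(\tilde f'))$, a short case analysis (on which of $\tilde f,\tilde f'$ has the larger leading monomial, the residual case of equal leading terms being dispatched by a descending induction on $\lm(\tilde f-\tilde f')$) makes $\beta_{k_0}$ a legal reducer of $\tilde f$ or of $\tilde f'$, contradicting $\sigma$-reducedness. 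Hence $\tilde f=\tilde f'$. The equality of the quotient components follows along the same lines: Invariant~\eqref{eq:extsigpoly} with $\tilde f=\tilde f'$ gives $(\tilde h-\tilde h')f_i\in I_{i-1}$, i.e. $\tilde h-\tilde h'\in I_{i-1}:f_i$, and the signature bounds on the index-$i$ reduction steps force $\lm(\tilde h)=\lm(\tilde h')=m$, so the argument just given, transported through Invariant~\eqref{eq:extsigpoly}, closes the remaining gap.

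As flagged, the crux is the inductive hypothesis — that having handled all S-pairs of signature $<\sigma$ already makes $G$ ``see'' every lower signature, in the sense of providing $\sigma$-regular standard representations. This lies at the heart of the correctness theory of signature-based algorithms, cannot be separated from the present lemma, and must be carried through the same induction on~$\sigma$; it is also the reason the statement is phrased in terms of the running state of~$P$ rather than as a static property of an abstract pair~$(\gamma,G)$.
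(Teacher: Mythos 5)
The paper does not actually prove this lemma: it is imported verbatim as Lemma~4 of Eder--Roune, so there is no in-paper argument to compare against. Your architecture --- induct on the signature, with the interleaved hypothesis that once all S-pairs of signature $<\sigma$ are processed, $G$ provides $\sigma$-regular standard representations for everything expressible below $\sigma$, then play such representations of $f-f'$, $f-\tilde f$, $f'-\tilde f'$ against the irreducibility of the outputs --- is the standard one from the signature-based literature, and you are right that the interleaved hypothesis is the crux. But that crux is \emph{granted}, not proven: it is essentially the whole correctness theory (it is where the chain-style argument for S-pairs and the rewriting criteria live), and it is strictly stronger than the lemma you are deducing from it. As written, you have reduced the statement to a harder unproved one; that is a proof plan, not a proof.

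There is also a concrete failure in the endgame. \textsc{RegularReduction} (Algorithm~\ref{alg:regular-reduction}) is top-reduction only: its output $\tilde f$ is regular \emph{top}-irreducible, and nothing is asserted about its lower-order terms. Your contradiction works when the reducer $\beta_{k_0}$ of $\lm(\tilde f-\tilde f')$ divides $\max(\lm\tilde f,\lm\tilde f')$, which yields $\lt(\tilde f)=\lt(\tilde f')$ --- in particular $\tilde f=0$ iff $\tilde f'=0$, which is all Algorithm~\ref{alg:f5syz} actually uses. But in the residual case $\lt(\tilde f)=\lt(\tilde f')$ with $\tilde f\neq\tilde f'$, the monomial $\lm(\tilde f-\tilde f')$ is a non-leading monomial of $\tilde f$ and $\tilde f'$, top-irreducibility gives no contradiction, and your ``descending induction on $\lm(\tilde f-\tilde f')$'' has nothing to descend against; full equality of the polynomial parts would require tail reduction. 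The quotient components are worse: $h-h'$ can be a nonzero element of $I_{i-1}:f_i$ of leading monomial $<m$ that no reduction step ever touches, so $\quo$ of the two outputs need not agree at all. In short, what your argument can deliver (and what should be stated) is equality of the leading terms of the polynomial parts, in particular simultaneous reduction to zero --- not equality of the output triples.
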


This leads to Algorithm~\ref{alg:f5syz}. It is similar to Algorithm~\ref{alg:buchberger-signature},
the only difference is the check on line~\ref{line:f5:check-rw}, the \emph{rewritability check}, which trim many computations.
At a given signature, this check will retain at most one element of~$P$.
The condition on line~\ref{line:rw:trivial-syyzgy} discards even more S-pairs by predicting that they will reduce to zero.

\begin{algorithm}[tp]
  \label{alg:rewrite}
  \caption{The rewritability criterion}
  \begin{algorithmic}[1]
    \Require $\alpha$ a sig-poly pair, $m$ a monomial, ~$G$ a set of sig-poly pairs with $\alpha \in G$
    \Ensure Returns true if~$m \alpha$ is rewritable w.r.t. $G$; false otherwise
    \Procedure{Rewritable}{$\alpha$, $m$, $G$}
    \For{$\delta \in G$}
    \If{$\sig(\delta)$ divides $\sig(m\alpha)$ and $\delta$ was added to $G$ later than $\alpha$} \label{line:rw:rewrite-order}
    \Return true \quad \emph{(Singular criterion)}
    \ElsIf{$\sig(\delta)$ divides~$\sig(m\alpha)$ and~$\poly(\delta) = 0$}
    \Return true \quad \emph{(Syzygy criterion)}
    \ElsIf{$\ind(\delta)<\ind(\alpha)$ and $\lm(\operatorname{poly}(\delta))$ divides $\lm(\quo(\alpha))$} \label{line:rw:trivial-syyzgy}
    \Return true \quad \emph{(Koszul criterion)}
    \EndIf
    \EndFor
    \Return false
    \EndProcedure
  \end{algorithmic}
\end{algorithm}

\begin{algorithm}[tp]
  \caption{sGB with recording of syzygies}
  \label{alg:f5syz}
  \begin{algorithmic}[1]
    \Require $f_1,\dotsc,f_r \in R$
    \Ensure See Theorem~\ref{theorem:f5}

    \Procedure{sGB}{$f_1,\dotsc,f_r$}
    \State $G\gets \left\{ (f_i, (i, 1), 1) \st 1\leq i\leq r \right\}$
    \State $S_1, \dotsc, S_r \gets \varnothing$
    \State $P\gets\left\{ (\alpha, \beta) \st \alpha, \beta\in G \text{ form a regular S-pair}  \right\}$
    \While{$P\neq \varnothing$}
    \State $(\alpha, \beta) \gets $ the element in $P$ with minimal signature
    \State $P\leftarrow P \setminus \{(\alpha, \beta)\}$
    \State $a,b\gets$ the monomials such that~$a \poly(\alpha) - b \poly(\beta) = \poly(\sig p(\alpha, \beta))$
    \If{not \Call{Rewritable}{$\alpha$, $a$, $G$} and not \Call{Rewritable}{$\beta$, $b$, $G$}} \label{line:f5:check-rw}
    \State $\gamma \leftarrow $ \Call{RegularReduction}{$\sig p(\alpha, \beta)$, $G$}
    \State $G \leftarrow G \cup \{\gamma\}$
    \If{$\poly(\gamma) \neq 0$}
    \State $P \leftarrow P \cup \left\{ (\gamma, \beta) \st \beta\in G \text{ forms a regular S-pair with~$\gamma$} \right\}$
    \Else \quad \emph{(record the quotient of the zero reduction)}
    \State $S_{\ind(\gamma)} \gets S_{\ind(\gamma)} \cup \left\{ \quo(\gamma) \right\}$ \label{line:f5:insertsyz}
    \EndIf
    \EndIf
    \EndWhile
    \Return $\left\{ \poly(\beta) \st \beta\in G \right\}$, $S_1$, \ldots, $S_r$
    \EndProcedure
  \end{algorithmic}
\end{algorithm}


More precisely, in the context of Lemma~\ref{lem:singular-criterion},
we can predict that all S-pairs with signature~$\gamma$ will reduce to the same element.
The first effect of the rewritability check is the discarding of all S-pairs with signature~$\sigma$, except at most one.
Secondly, Lemma~\ref{lem:singular-criterion} may be used to predict that a S-pair will reduce to zero.
There are two criteria for that:
\begin{description}
  \item[Syzygy criterion] If an element in signature~$\tau$ has reduced to zero, then every element in signature $a \tau$ (for any monomial~$a$) will reduce to zero;
  \item[Koszul criterion] If we have a sig-poly pair~$h$ with index~$< \ind(\sigma)$ and, then every element in signature~$(a \lm h, \ind (\sigma))$ will reduce to zero,
        (because~$h f_{\ind(\sigma)}$ will obviously reduce to zero).
\end{description}
This explains the different checks in the rewritability criterion (Algorithm~\ref{alg:rewrite}), see \cite[section 7.1]{eder_faugere_survey} for a detailed discussion.

\begin{theorem}\label{theorem:f5}
  On input~$f_1,\dotsc,f_r \in R$,
  Algorithm~\ref{alg:f5syz} terminates
  and outputs subsets~$G$, $S_1,\dotsc,S_r$ of~$R$ such that:
  \begin{enumerate}[(i)]
    \item $G$ is a Gröbner basis of~$\langle f_1,\dotsc,f_r \rangle$;
    \item $I_{i-1} + \langle S_i \rangle = I_{i-1} : f_i$.
  \end{enumerate}
  Moreover, on line~\ref{line:f5:insertsyz},
  when a polynomial~$g$ is inserted in some~$S_i$, then
  $\lm(g)$ is not divided by the leading monomial of any element of~$I_{i-1}$
  or any element previously inserted in~$S_i$.
\end{theorem}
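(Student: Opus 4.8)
The plan is to prove this by a careful analysis of Algorithm~\ref{alg:f5syz}, combining termination arguments with the invariants established for Algorithm~\ref{alg:buchberger-signature}. I would structure it in three parts corresponding to termination, correctness of the output sets, and the final ``minimality'' property of newly inserted generators.

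First, for \textbf{termination}, I would argue that each S-pair signature is processed at most once and that the rewritability check ensures that, in each index $i$, only finitely many signatures $\sigma=(i,m)$ survive; more precisely, the set of leading monomials of the $\poly(\gamma)$ with $\poly(\gamma)\ne 0$ and the set of monomial parts of signatures that reduce to zero both stabilize by Dickson's lemma, so no new elements are added to $G$ past a certain point and $P$ is eventually exhausted. This mirrors standard termination proofs for signature-based algorithms, so I would cite \cite{eder_faugere_survey} for the bulk of it and only highlight that our variant with extended sig-poly pairs does not affect the argument, since the signature and polynomial parts evolve exactly as in the classical setting.

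Second, for statements \emph{(i)} and \emph{(ii)}, I would first invoke Lemma~\ref{lem:singular-criterion} to show that discarding S-pairs via the rewritability check (Algorithm~\ref{alg:rewrite}) does not change the set of polynomial parts computed, modulo re-deriving the same elements later; hence the output of Algorithm~\ref{alg:f5syz} has the same $G$ (up to Gröbner-basis equivalence) and the same $S_i$ as Algorithm~\ref{alg:buchberger-signature}, and \emph{(i)} and \emph{(ii)} then follow from Proposition~\ref{prop:buchberger-signature}. The one point needing care is that \emph{(ii)} is phrased as $I_{i-1}+\langle S_i\rangle = I_{i-1}:f_i$ rather than $\langle S_i\rangle$ being a Gröbner basis of $I_{i-1}:f_i$; I would derive this from Definition~\ref{def:extended-sig-poly}, Invariant~\eqref{eq:extsigpoly} (which forces $\quo(\gamma)\in I_{\ind(\gamma)-1}:f_{\ind(\gamma)}$ whenever $\poly(\gamma)=0$), together with the completeness direction: every element of $I_{i-1}:f_i$ arises, modulo $I_{i-1}$, from a syzygy whose signature is reached by the algorithm, which is precisely what the signature-based machinery guarantees.

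Third, for the \textbf{minimality property} — that when $g$ is inserted into $S_i$, $\lm(g)$ is divisible neither by the leading monomial of any element of $I_{i-1}$ nor by that of any element previously inserted in $S_i$ — I would argue directly from the rewritability check on line~\ref{line:f5:check-rw}. The key observation is that $\lm(g) = \lm(\quo(\gamma))$ equals the monomial part of $\sig(\gamma)$, by the defining condition of an extended sig-poly pair. If $\lm(g)$ were divisible by $\lm(h)$ for some $h\in S_i$ previously inserted (so $\poly$ of the corresponding pair is zero, with signature dividing $\sig(\gamma)$), the Syzygy criterion branch of Algorithm~\ref{alg:rewrite} would have fired and the S-pair would have been discarded, contradicting that $\gamma$ was actually reduced. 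Similarly, if $\lm(g)$ were divisible by $\lm(p)$ for some $p\in I_{i-1}$, then — since a Gröbner basis of $I_{i-1}$ has already been computed by the time index $i$ is processed, and such $p$ sits at index $<i$ — the Koszul criterion branch on line~\ref{line:rw:trivial-syyzgy} would have fired. I expect this last part to be the \textbf{main obstacle}: one has to match up carefully the three branches of the rewritability check with the three ways $\lm(g)$ could fail to be minimal, keeping track of the ``added later than'' bookkeeping in line~\ref{line:rw:rewrite-order} and making sure that by the time we reach signature $\sig(\gamma)$ in index $i$, the relevant witnesses (a zero-reduction at a dividing signature, or a low-index generator with dividing leading monomial) are already present in $G$. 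This requires spelling out the order in which signatures and indices are processed (established in the discussion after Algorithm~\ref{alg:buchberger-signature}) and confirming that the rewritability check sees a complete enough $G$ at that moment.
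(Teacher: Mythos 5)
Your plan follows essentially the same route as the paper: termination and points (i)--(ii) are deferred to the general signature-based Gr\"obner basis theory (the paper cites \cite[Theorem~7.1]{eder_faugere_survey} directly rather than routing through Proposition~\ref{prop:buchberger-signature}, which it explicitly leaves unproven), and the minimality property is obtained exactly as you describe, from the observation that $\lm(g)$ is the monomial part of $\sig(\gamma)$, combined with the Syzygy criterion (ruling out divisibility by leading monomials of elements previously inserted in $S_i$) and the Koszul criterion together with the fact that the incremental order of processing guarantees a Gr\"obner basis of $I_{i-1}$ is already in $G$ (ruling out divisibility by leading monomials of $I_{i-1}$). The only deviations are cosmetic, and your proposal accepts the same external black boxes as the paper does.
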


\begin{proof}
  Termination and the first two
  points are a special case of \cite[Theorem~7.1]{eder_faugere_survey},
  where we only compute partial information about the syzygy module.

  The last point is a consequence from the rewritability check.
  We first note that every time a polynomial~$h$ is inserted into~$S_i$,
  the extended sig-poly pair~$(0, (i, \lm h), h)$ has been inserted into~$G$ just before.
  (The monomial part of the signature is always the leading monomial of the quotient, this is an invariant of sig-poly pairs.)
  Next, in the context of line~\ref{line:f5:insertsyz}, if~$g = \quo(\gamma)$,
  then~$\sig(\gamma) = (\ind(\gamma), \lm(g))$.
  Moreover, $\gamma$ comes from a S-pair $\sig p(\alpha, \beta)$, so~$\sig(\gamma) = a \sig(\alpha)$ or~$b\sig (\beta)$,
  and both \Call{Rewritable}{$\alpha$, $a$, $G$} and \Call{Rewritable}{$\beta$, $b$, $G$} were false.

  The Syzygy criterion implies that~$\sig(\gamma)$ is not divided by any~$\sig(\delta)$, where~$\delta\in G$ and~$\poly(\delta) = 0$.
  In other words, $\lm(g)$ is not divided by any~$\lm h$, where $h$ has been previously inserted into~$S_i$.

  The Koszul criterion implies that~$\lm(g)$ is not divided by any~$\lm(\poly(\delta))$, where~$\delta \in G$ and~$\ind(\delta) < i$.
  But due to the incremental nature of the algorithm, the set~$\left\{ \poly(\delta) \st \delta \in G, \ind(\delta) < i \right\}$
  is a Gröbner basis of~$I_{i-1}$. So~$\lm(g)$ is not divided by any element in~$I_{i-1}$.
\end{proof}





\subsection{The sGB tree datastructure}
\label{sec:f5-tree-datastr}

\subsubsection{Specification}
\label{sec:specification}

We now specify a data structure, called \emph{sGB tree}.
It is meant to extend the sGB algorithm presented above in two ways: by offering the possibility to add new input equations
during the computation; and by offering the possibility to split the computation into different branches while sharing the common base.

An sGB tree represents a rooted tree~$T$ where each node holds an element of the polynomial ring~$R$.
The nodes are partially ordered by the ancestor-descendant relation: $\nu \leq_T \mu$ if~$\nu$ is on the unique path from~$\mu$ to the root of~$T$ (or, equivalently, if~$\mu$ is in the subtree rooted at~$\nu$).
For a node~$\nu$, the polynomial contained in~$\nu$ is denoted~$\poly(\nu)$,
and the ideal generated by the polynomials contained by the ancestors of~$\nu$ (not including~$\nu$)
is denoted~$I_{< \nu}$.
An sGB tree offers the following three operations. How we implement them is the matter of the next section.
\begin{description}
  \item[Node insertion] Insert a new node, containing a given polynomial~$f$, anywhere in the tree, as a new leaf or on an existing edge.
        Denoted~\Call{InsertNode}{$\mathcal{T}$, $f$, position}.
  \item[Gröbner basis] Given a node~$\nu$, outputs a Gröbner basis of the ideal generated by the polynomials contained in the nodes~$\leq_T \nu$.
        Denoted \Call{Basis}{$\mathcal{T}$, $\nu$}.
  \item[Get a syzygy]
        Given a node~$\nu$, outputs an element of~$I_{< \nu} : \poly(\nu)$.
        Denoted \Call{GetSyzygy}{$\mathcal{T}$, $\nu$}.

        If \Call{GetSyzygy}{$\mathcal{T}$, $\nu$} outputs zero,
        then~$I_{<\nu} + J = I_{< \nu} : \poly(\nu)$, where~$J$ is the ideal generated by
        all previous invocations of~\Call{GetSyzygy}{$\mathcal{T}$, $\nu$}.

        It is guaranteed that \Call{GetSyzygy}{$\mathcal{T}$, $\nu$} eventually outputs zero
        after sufficiently many invocation,
        even if nodes are inserted or \textsc{GetSyzygy} is called on other nodes in between.
\end{description}

\subsubsection{Implementation}
\label{sec:implementation}

\begin{algorithm}[tp]
  \caption{Implementation of the sGB tree data structure}
  \label{alg:sgbtree}
  \begin{algorithmic}[1]
    \Require An sGB tree~$\mathcal{T}$ and a label of~$T$
    \Ensure Process the pair in~$P$ with index above~$\nu$ with smallest signature
    \Procedure{ProcessSPair}{$\mathcal{T}$, $\nu$}
    \State \emph{(restrict to S-pairs whose indices are above~$\nu$)}
    \State $P' \gets \left\{ (\alpha, \beta) \st \max \left\{ \ind(\alpha), \ind(\beta) \right\} \leq_T \nu \right\}$
    \If{$P' \neq \varnothing$}
    \State $(\alpha, \beta) \gets$ the pair in~$P'$ with $\sig(\sig p(\alpha, \beta))$ minimal
    \State $P \gets P \setminus \left\{ (\alpha, \beta) \right\}$
    \State $a,b\gets$ the monomials such that~$a \poly(\alpha) - b \poly(\beta) = \poly(\sig p(\alpha, \beta))$
    \If{not \Call{Rewritable}{$\alpha$, $a$, $G$} and not \Call{Rewritable}{$\beta$, $b$, $G$}} 
    \State $\gamma \leftarrow $ \Call{RegularReduction}{$\sig p(\alpha, \beta)$, $G$}
    \State $G \gets G \cup \left\{ \gamma \right\}$
    \If{$\poly(g) \neq 0$}
    \State $P \leftarrow P \cup \left\{ (\gamma, \beta) \st \beta\in G \text{ forms a regular S-pair with~$\gamma$} \right\}$
    \Else \quad \emph{(record the quotient of the zero reduction)}
    \State $S_{\ind(\gamma)} \gets S_{\ind(\gamma)} \cup \left\{ \quo(\gamma) \right\}$ \label{line:processspair:newsyz}
    \EndIf
    \EndIf
    \EndIf
    \EndProcedure
  \end{algorithmic}

  \medskip
  \begin{algorithmic}[1]
    \Require A sGB tree~$\mathcal{T}$ and a label~$\nu$ of~$T$
    \Ensure A Gröbner basis of~$I_{<\nu}$
    \Procedure{Basis}{$\mathcal{T}$, $\nu$}
    \While{there is a pair in~$P$ with index $\leq_T \nu$}
    \State \Call{ProcessSPair}{$\mathcal{T}$, $\nu$}
    \EndWhile
    \Return $\left\{ \poly(\alpha) \st \alpha \in G \text{ and } \ind(\alpha) \leq_T \nu \right\}$
    \EndProcedure
  \end{algorithmic}

  \medskip
  \begin{algorithmic}[1]
    \Require A sGB tree~$\mathcal{T}$ and a label~$\nu$ of~$T$
    \Ensure  An element of the quotient ideal~$I_{<\nu} : \poly(\nu)$ not contained in $I_{<\nu}$
    \Procedure{GetSyzygy}{$\mathcal{T}$, $\nu$}
    \While{there is a pair in~$P$ with index $\leq_T\nu$ \textbf{and} $S_\nu = \varnothing$}
    \State \Call{ProcessSPair}{$\mathcal{T}$, $\nu$}
    \EndWhile
    \If{$S_\nu \neq \varnothing$}
    \State pick and remove some~$h$ in~$S_\nu$
    \Return $h$
    \Else
    \Return 0
    \EndIf
    \EndProcedure
  \end{algorithmic}
\end{algorithm}

From the point of implementation,  an sGB tree
is made of:
\begin{enumerate}[(a)]
  \item a rooted tree $T$ containing whose nodes are labelled with integers;
  \item a set~$G$ of extended sig-poly pairs whose indices are nodes of~$T$ (see below);
  \item a set~$P$ of pairs of elements of~$G$ forming regular S-pairs;
  \item for each node~$\nu$ of~$T$, a subset~$S_\nu$ of~$R$.
\end{enumerate}

The sets~$G$, $P$ and~$S_\nu$ have the same role as their counterparts in the sGB algorithm (Algorithm~\ref{alg:f5syz}).
The main difference is a twist in the definition of signatures and indices.
In~\S \ref{sec:sign-extend-sig}, an index (that is the first component of a signature)
is a nonnegative integer.
From now on, indices are nodes in~$T$.
Indices are partially ordered by the ancestor-descendant relation~$\leq_T$.
Note that for a given node~$\nu$, the subset $\left\{ \mu \st \mu \leq_T \nu \right\}$ is totally ordered: it is the set of nodes on the path from the root of~$T$ to~$\nu$.
Lastly, we adjust the definition of a regular~S-pair.
We say that sig-poly pairs~$\alpha$ and~$\beta$ form a regular S-pair if~$\ind(\alpha)$ and~$\ind(\beta)$ are comparable (that is either~$\ind(\alpha) \leq_T \ind(\beta)$ or~$\ind(\beta) \leq_T \ind(\alpha)$)
and $\sig(a\alpha) \neq \sig(b\beta)$, with~$a$ and~$b$ as in~\S \ref{sec:sign-extend-sig}.
To analyze the behavior of the sGB-tree data structure, we always consider totally ordered subsets of indices,
thus reducing to the context of Algorithm \ref{alg:f5syz}.

To implement \Call{Basis}{$\mathcal{T}$, $\nu$},
we process the S-pairs with index~$\leq_T \nu$.
The indices of these S-pairs are totally ordered, so we are actually in the situation of~\S \ref{sec:from-buchberger-f5}
and we may apply the main loop of Algorithm~\ref{alg:f5syz}.
The body of this loop is isolated in procedure \textsc{ProcessSPair} (Algorithm~\ref{alg:sgbtree}), with the appropriate alterations.

The implementation of~\Call{GetSyzygy}{$\mathcal{T}$, $\nu$} is similar, with the difference that we abort the computation as soon as the set~$S_\nu$ is not empty and return an element of it, see Algorithm~\ref{alg:sgbtree}. If~$S_\nu$ is still empty after having processed all S-pairs which may lead to new elements in~$S_\nu$, the value~$0$ is returned.

We assume that the state of a sGB tree always results from a sequence of calls to \textsc{InsertNode}, \textsc{Basis} or \textsc{GetSyzygy}
applied to an initially empty tree.

\begin{algorithm}[tp]
  \caption{The sGB tree data structure, insertion of a node}
  \label{alg:insertnode}
  \begin{algorithmic}[1]

    \Require A sGB tree~$\mathcal{T}$, a polynomial~$f$ and a description of the position of the new node in~$T$
    \Ensure The label of the newly inserted node
    \Procedure{InsertNode}{$\mathcal{T}$, $f$, position}
    \State $\nu \gets \text{(largest label in~$T$)} + 1$
    \State insert a node in~$T$ with label~$\nu$, as described by ``position''
    \State $S_\nu \gets \varnothing$
    \State $\epsilon \gets (f, (\nu, 1), 1)$
    \State $P \gets \left\{ (\epsilon, \beta) \st \beta \in G \text{ and } (\epsilon, \beta) \text{ is regular} \right\}$
    \State $G \gets G \cup \left\{ \epsilon \right\}$
    \Return $\nu$
    \EndProcedure
  \end{algorithmic}
\end{algorithm}


\begin{proposition}
  Let~$\mathcal{T}$ be a sGB tree and let~$\nu$ be a node of~$\mathcal{T}$.
  {\upshape \Call{Basis}{$\mathcal{T}$, $\nu$}}
  (Algorithm~\ref{alg:sgbtree}) terminates and
  outputs a Gröbner basis of~$I_\nu$.
\end{proposition}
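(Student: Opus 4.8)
The plan is to reduce the statement to Theorem~\ref{theorem:f5} by showing that, once we restrict attention to the data indexed by the nodes on the path to~$\nu$, the procedure \Call{Basis}{$\mathcal{T}$, $\nu$} is an execution of Algorithm~\ref{alg:f5syz} on the polynomials lying on that path.

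First I would set up notation. Let $\mu_1 <_T \dots <_T \mu_k = \nu$ be the nodes on the unique path from the root of~$T$ to~$\nu$ (recall that $\{\mu : \mu \leq_T \nu\}$ is totally ordered), and let $g_j := \poly(\mu_j)$, so that $I_\nu = \langle g_1,\dots,g_k\rangle$. I would then isolate the \emph{restricted data} associated with~$\nu$: the elements of~$G$ whose index is~$\leq_T\nu$, the pairs of~$P$ both of whose indices are~$\leq_T\nu$, and the sets $S_{\mu_1},\dots,S_{\mu_k}$. Two elementary observations make this data self-contained. Since signatures are ordered by index first and $(\mu, m) < (\mu_j, n)$ whenever $\mu <_T \mu_j$, a pair whose larger index is strictly below~$\mu_j$ has smaller signature than any pair of index~$\mu_j$; hence whenever a call \textsc{ProcessSPair}$(\mathcal{T},\mu)$ actually touches the restricted data of~$\nu$, the S-pair it processes is the one of minimal signature among all pairs with both indices~$\leq_T\nu$, whether~$\mu$ lies above, below, or off the path. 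Secondly, the rewritability test (Algorithm~\ref{alg:rewrite}) applied to a pair of index~$\mu_j$ only inspects elements of~$G$ of index~$\mu_j$ (divisibility of signatures forces equal indices) or of index~$<_T\mu_j$ (the Koszul branch), all of which lie on the path; so its outcome is a function of the restricted data alone.

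Next I would prove, by induction over the sequence of \textsc{InsertNode}/\textsc{Basis}/\textsc{GetSyzygy} calls that built~$\mathcal{T}$, the invariant: \emph{for every node~$\nu$ the restricted data is reachable from the empty state by a partial run of Algorithm~\ref{alg:f5syz} on~$g_1,\dots,g_k$ in which S-pairs are processed in non-decreasing signature order}, with the single proviso that~$G$ may carry additional elements of index~$\leq_T\nu$ that do not occur in that run. Such extra elements appear precisely when a node is spliced onto an interior edge of the path after a descendant has already been processed, which is exactly how the data structure inserts a freshly discovered syzygy in front of the last generator of a chain. At the induction step there are two cases. Inserting a node on the path corresponds to the incremental step of Algorithm~\ref{alg:f5syz} that appends the next generator, because the new S-pairs carrying that index have signature above everything previously enqueued of smaller index. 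Running \textsc{ProcessSPair} as driven by \textsc{Basis} or \textsc{GetSyzygy} advances the run by one step: it either drops the minimal-signature S-pair because it is rewritable, or regular-reduces it and enqueues its reduction together with the resulting new regular S-pairs of index~$\leq_T\nu$, exactly as Algorithm~\ref{alg:f5syz} would. The extra $G$-elements must be shown harmless: they still satisfy Invariant~\eqref{eq:extsigpoly} (the ambient ideal~$I_{\ind(\alpha)-1}$ only grows when a generator is inserted upstream), any recorded syzygy among them remains a syzygy for the enlarged chain, they never re-enter~$P$, and they can only cause an S-pair to be \emph{declared} rewritable, in which case its reduction is already accounted for; thus Lemma~\ref{lem:singular-criterion} and the correctness argument behind Theorem~\ref{theorem:f5} still apply to the chain~$g_1,\dots,g_k$.

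Finally, granting the invariant, termination of \Call{Basis}{$\mathcal{T}$, $\nu$} follows from termination of Algorithm~\ref{alg:f5syz} on~$g_1,\dots,g_k$ (Theorem~\ref{theorem:f5}): only finitely many S-pairs of index~$\leq_T\nu$ are ever enqueued, each iteration of the while-loop of \textsc{Basis} removes one via \textsc{ProcessSPair}, and the loop halts once none is left. At that moment the restricted data is a terminal state of Algorithm~\ref{alg:f5syz} on~$g_1,\dots,g_k$, so by Theorem~\ref{theorem:f5}(i) the returned set $\{\poly(\alpha) : \alpha\in G,\ \ind(\alpha) \leq_T\nu\}$ is a Gröbner basis of $\langle g_1,\dots,g_k\rangle = I_\nu$. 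I expect the main obstacle to be the invariant of the previous paragraph, and within it the delicate point of justifying that interior-edge insertions --- which genuinely occur in the intended use of the data structure --- do not spoil the eventual Gröbner basis; there the signature criteria have to be revisited rather than invoked verbatim.
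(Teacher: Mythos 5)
Your proposal takes essentially the same route as the paper: restrict to the S-pairs whose indices lie on the (totally ordered) path to~$\nu$, observe that \textsc{Basis} then behaves like Algorithm~\ref{alg:f5syz} on the polynomials of that path, and invoke Theorem~\ref{theorem:f5} for termination and correctness. In fact you are more explicit than the paper's own three-sentence proof, which simply asserts that starting from a state where some S-pairs were already processed ``in an unspecified order'' by earlier calls invalidates neither termination nor correctness --- precisely the point (interior-edge insertions breaking the non-decreasing signature order) that you correctly isolate as the remaining delicate step.
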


\begin{proof}
  This algorithm considers only S-pairs whose signatures are above a given node~$\nu$.
  After this restriction, the signature are totally ordered, so \textsc{Basis} behaves exacly like Algorithm~\ref{alg:f5syz} (\textsc{sGB}).
  We note that, contrary to \textsc{sGB},
  \textsc{Basis}
  may start in a state where several S-pairs have already been processed, in an unspecified order, by earlier calls to \textsc{Basis} or \textsc{GetSyzygy}
  on different nodes.
  This does not invalidate neither the termination proof given in \cite{EderRoune_2013}, nor the proof of correctness.
\end{proof}

\begin{proposition}\label{prop:getsyzygy}
  Let~$\mathcal{T}$ be a sGB tree and let~$\nu$ be a node of~$\mathcal{T}$.
  {\upshape \Call{GetSyzygy}{$\mathcal{T}$, $\nu$}}
  (Algorithm~\ref{alg:sgbtree}) terminates and
  outputs some~$f \in R$ such that:
  \begin{enumerate}[(i)]
    \item\label{item:getsyz:correct} $f \in I_{<\nu} : \poly(\nu)$;
    \item\label{item:getsyz:nonredundant} if~$f\neq 0$, then $\lm(f)$ is not divisible by the leading monomial of any other polynomial previously output by   {\upshape \Call{GetSyzygy}{$\mathcal{T}$, $\nu$}}, or any polynomial in~$I_{< \nu}$;
    \item\label{itme:getsyz:zero} if~$f = 0$, then $I_{<\nu}:\poly(\nu)$ is generated by~$I_{<\nu}$ and the polynomials previously output by  {\upshape \Call{GetSyzygy}{$\mathcal{T}$, $\nu$}}.
  \end{enumerate}
\end{proposition}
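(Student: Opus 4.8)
The plan is to reduce \textsc{GetSyzygy}$(\mathcal{T}, \nu)$ to a run of the sGB algorithm (Algorithm~\ref{alg:f5syz}) restricted to the indices on the path from the root to~$\nu$, and then invoke Theorem~\ref{theorem:f5}. First I would observe that every S-pair processed by a call to \textsc{GetSyzygy}$(\mathcal{T}, \nu)$ has both indices $\leq_T \nu$, hence these indices form a totally ordered chain $\mu_1 <_T \mu_2 <_T \cdots <_T \mu_k = \nu$; on this chain the ancestor-descendant order agrees with a linear index order, so the state $(G, P, S_\bullet)$ restricted to these indices is exactly the state of a partial run of \textsc{sGB} on the corresponding sequence of polynomials $\poly(\mu_1), \dots, \poly(\mu_k)$. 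The key point (as in the proof of the preceding proposition) is that the intermediate calls to \textsc{Basis} or \textsc{GetSyzygy} on other nodes only process S-pairs whose indices are \emph{not} all $\leq_T \nu$, or process chain S-pairs in a different-but-admissible order; neither affects the invariants needed for Lemma~\ref{lem:singular-criterion} or the termination argument of~\cite{EderRoune_2013}. I would state this as a lemma: at any point in the life of the tree, the restriction of $(G,P)$ to indices $\leq_T\nu$ is a legal intermediate state of \textsc{sGB} run on $\poly(\mu_1),\dots,\poly(\mu_k)$.

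With that reduction in place, termination of \textsc{GetSyzygy} follows because \textsc{Basis}$(\mathcal{T},\nu)$ terminates (previous proposition): \textsc{GetSyzygy} processes a subset of the same S-pairs and stops no later than \textsc{Basis}. For \ref{item:getsyz:correct}: when a polynomial $h$ is placed in $S_\nu$ on line~\ref{line:processspair:newsyz}, it is $\quo(\gamma)$ for some $\gamma$ with $\poly(\gamma)=0$ and $\ind(\gamma)=\nu$; Invariant~\eqref{eq:extsigpoly} gives $0 - h\,\poly(\nu) \in I_{<\nu}$, i.e. $h \in I_{<\nu}:\poly(\nu)$. (If $0$ is returned instead, it is trivially in the quotient ideal.) For \ref{item:getsyz:nonredundant}: this is precisely the ``Moreover'' clause of Theorem~\ref{theorem:f5} applied to the restricted \textsc{sGB} run — the Syzygy and Koszul criteria in \textsc{Rewritable} guarantee that $\lm(h)$ is divisible neither by the leading monomial of any polynomial previously inserted in $S_\nu$ nor by that of any element of the Gröbner basis of $I_{<\nu}$, hence of any element of $I_{<\nu}$. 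I must check here that by the time $h$ is produced, $\{\poly(\delta): \ind(\delta)<_T\nu\}$ is already a Gröbner basis of $I_{<\nu}$; this holds because all S-pairs of strictly smaller index have necessarily been processed before any S-pair of signature with index $\nu$ (signatures are processed in increasing index order along the chain), just as in the incremental argument inside the proof of Theorem~\ref{theorem:f5}.

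For \ref{itme:getsyz:zero}, the subtle point: \textsc{GetSyzygy} returns $0$ only when $S_\nu=\varnothing$ and there is no longer any pair in $P$ with index $\leq_T\nu$, i.e. the restricted \textsc{sGB} run has exhausted all its S-pairs. By Theorem~\ref{theorem:f5}\,(ii) applied to that run, $I_{<\nu} + \langle S_\nu^{\mathrm{full}}\rangle = I_{<\nu}:\poly(\nu)$ where $S_\nu^{\mathrm{full}}$ is the set of \emph{all} quotients recorded at index $\nu$ over the entire run. I would then argue that $S_\nu^{\mathrm{full}}$ equals exactly the multiset of polynomials previously output by \textsc{GetSyzygy}$(\mathcal{T},\nu)$: each such output was removed from $S_\nu$ when returned, and $S_\nu$ is never emptied by any other mechanism, so ``currently in $S_\nu$ (= empty) together with previously output'' accounts for every element ever inserted. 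Hence $I_{<\nu}:\poly(\nu)$ is generated by $I_{<\nu}$ together with the previously output polynomials, which is the claim. The main obstacle I anticipate is the bookkeeping in this last step — making rigorous that interleaved \textsc{InsertNode}/\textsc{Basis}/\textsc{GetSyzygy} calls on \emph{other} nodes never touch $S_\nu$ nor alter the chain-restricted state in a way that \textsc{sGB} would not itself produce — and I would handle it by the restriction lemma stated at the outset, together with the observation that \textsc{InsertNode} only ever adds a node with a fresh maximal label, so it either lies off the path to $\nu$ (irrelevant) or becomes a new descendant leaf below $\nu$, which cannot create S-pairs of index $\leq_T\nu$.
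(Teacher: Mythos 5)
Your proposal is correct and follows essentially the same route as the paper, which likewise derives termination from that of \textsc{Basis} and correctness from Theorem~\ref{theorem:f5} after restricting to the totally ordered chain of indices $\leq_T \nu$; your write-up simply makes explicit the restriction lemma and bookkeeping that the paper leaves implicit. One small imprecision: \textsc{InsertNode} may also place a node \emph{on} the path from the root to~$\nu$ (as Algorithm~\ref{alg:nondeg-f5tree} does ``just above~$\nu$''), which does enlarge $I_{<\nu}$ and create new S-pairs of index $\leq_T\nu$ --- but this is harmless since, as your restriction lemma already captures, the chain-restricted state remains a legal intermediate state of \textsc{sGB} on the (extended) sequence of path polynomials.
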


\begin{proof}
  Termination follows from the termination of \textsc{Basis} since the main loop is similar, but with the possibility of earlier termination.
  Correctness follows from Theorem~\ref{theorem:f5} after restricting to indices above~$\nu$.
\end{proof}

As a consequence of Proposition~\ref{prop:getsyzygy}\ref{item:getsyz:nonredundant}, it is guaranteed that
\Call{GetSyzygy}{$\mathcal{T}$, $\nu$} eventually outputs zero after
sufficiently many invocation, even if nodes are inserted or \textsc{GetSyzygy}
is called on other nodes in between. Indeed, the leading monomial of a nonzero
output of \Call{GetSyzygy}{$\mathcal{T}$, $\nu$} is constrained to be outside
the monomial ideal generated by the leading monomials of previous output.
By Dickson's lemma, this may only happen finitely many times.




\section{Computation of the nondegenerate locus}

The sGB-tree data structure may can be used to implement an efficient variant of Algorithm~\ref{alg:decomp} for computing the nondegenerate locus.
We use a sGB tree to compute efficiently saturations~$I : f^\infty$, and also double quotient~$I : (I : f^\infty)$,
with the idea to exploit as soon as possible newly discovered relations to simplify further computations.
This leads to Algorithm~\ref{alg:nondeg-f5tree}, which we describe informaly as follows.

Similarly to Algorithm~\ref{alg:decomp},
we introduce the equations~$f_1,\dotsc,f_r$ one after the other.
We maintain a sGB tree which, at the beginning of the~$k$th iteration, that is after having processed~$f_1,\dotsc,f_{k-1}$, has the following shape:
\[ \mathbf{g}_1 \gets f_1 \gets \mathbf{p}_1 \gets \dotsb \gets \mathbf{g}_{k-1} \gets f_{k-1} \gets \mathbf{p}_{k-1} \gets \underbrace{0}_{\nu}
  \raisebox{-.3em}{$\begin{matrix}
  \swarrow h_1\\
  \gets h_2 \\
  \phantom{\gets} \vdots
\end{matrix}$},
\]
where bold letters represent a sequence of zero, one or several nodes.
The tree grows from the node labeled~$\nu$, by adding new leaf nodes, or inserting nodes just above~$\nu$.
Using the notations of Algorithm~\ref{alg:decomp},
the nodes~$\mathbf{g}_i$ are related to the saturation~$G: f_i^\infty$,
the leaf nodes~$h_i$ are generic elements of the ideals in the set~$\mathcal{K}$,
and the nodes~$\mathbf{p}_i$ are related to the cleaning steps~$G : K^\infty$. The leaf nodes $h_i$ are generic
in the sense that they are either each a random linear combination of generators of the ideals in $\mathcal{K}$
or each a linear combination of of generators of the ideals in $\mathcal{K}$ with each coefficient a new variable.

The~$k$th iteration proceeds as follows.
Firstly, a new node~$\mu$ containing $f_k$ is created just above $\nu$:
\[ \dotsb \gets \underbrace{f_k}_{\mu} \gets \underbrace{0}_{\nu} \gets \dotsb. \]
As long as \Call{GetSyzygy}{$\mathcal{T}$, $\mu$} returns nonzero elements ($g_1,g_2,\dotsc$),
we insert them above~$\mu$:
\[ \dotsb \gets g_1 \gets g_2 \gets \dotsb \gets \underbrace{f_k}_{\mu} \gets \underbrace{0}_{\nu} \gets \dotsb. \]
This saturation has the effect of completing~$I_{< \mu}$ into~$I_{< \mu} : f_k^\infty$.
Each time we insert a polynomial~$g_i$ in a node, say~$\gamma$, we also record the syzygies \Call{GetSyzygy}{$\mathcal{T}$, $\gamma$},
take a generic linear combination and insert it as a new leaf node. These syzygies are related to the double quotient~$I_{< \mu}:(I_{<\mu} : f_k^\infty)$.
Before going to the next iteration, insert above~$\nu$ all the syzygies obtained from the children of~$\nu$.
Which again has the effect of saturating~$I_{<\nu}$ by the polynomials contained in these nodes.

After all the input equations have been processed, the ideal~$I_{<\nu}$ is a nondegenerate part of the input ideal,
which we prove by comparing with Algorithm~\ref{alg:decomp}.

\begin{algorithm}[tp]
  \caption{Computation of the nondegenerate locus with an sGB tree}
  \label{alg:nondeg-f5tree}
  \begin{algorithmic}[1]
    \Require $f_1,\dots,f_c \in R$
    \Ensure A Gröbner basis $G$ of a nondegenerate locus of $(f_1,\dots,f_c)$
    \State $\mathcal{T} \gets$ an empty sGB tree
    \State $\nu \gets $ \Call{InsertNode}{$\mathcal{T}$, $0$}
    \For{$k$ from~$1$ to~$c$}
    \State $\mu \gets$ \Call{InsertNode}{$\mathcal{T}$, $f_k$, just above~$\nu$} \label{line:insertmu}
    \Loop  \label{line:whilemu}
      \State $g \gets$ \Call{GetSyzygy}{$\mathcal{T}$, $\mu$} \label{line:getsyzmu}
      \If{$g = 0$} \State \textbf{break} \EndIf
      \State $P \gets P \cup \{(\poly(\mu), (\mu, 1), 1)\}$
    \State $\gamma \gets $ \Call{InsertNode}{$\mathcal{T}$, $g$, just above~$\mu$} \label{line:insertg}
    \State $h\gets 0$
    \State $t \gets $ a random scalar (or the slack variable, see Remark~\ref{remark:deterministic}) \label{line:randomscalar}
    \Loop  \label{line:whilegamma}
    \State $h' \gets$ \Call{GetSyzygy}{$\mathcal{T}$, $\gamma$}
    \If{$h'=0$} \State \textbf{break} \EndIf
    \State $h \gets t h + h'$
    \EndLoop
    \State \Call{InsertNode}{$\mathcal{T}$, $h$, as a child of~$\nu$} \label{line:inserth}
    \EndLoop
    \ForAll{child~$\beta$ of~$\nu$} \label{line:loopcleaning}
    \Loop \label{line:loopcleaning-inner}
    \State $b \gets$ \Call{GetSyzygy}{$\mathcal{T}$, $\beta$}
    \If{$b=0$} \State \textbf{break} \EndIf
    \State \Call{InsertNode}{$\mathcal{T}$, $b$, just above~$\nu$}
    \EndLoop
    \EndFor
    \EndFor
    \Return $\operatorname{Basis}(\nu)$
  \end{algorithmic}
\end{algorithm}

\begin{theorem}
  Algorithm~\ref{alg:nondeg-f5tree} terminates and is correct.
\end{theorem}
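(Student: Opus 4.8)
The plan is to show that Algorithm~\ref{alg:nondeg-f5tree} is a faithful implementation of Algorithm~\ref{alg:decomp}, so that termination and correctness follow from Theorem~\ref{thm:correctness} together with the specification of the sGB-tree operations (Propositions on \textsc{Basis} and \textsc{GetSyzygy}, in particular Proposition~\ref{prop:getsyzygy}). The key bookkeeping device is to track, at the beginning and end of each outer iteration~$k$, the ideal~$I_{<\nu}$ generated by the ancestors of the distinguished node~$\nu$, and to match it against the value of the variable~$J$ in Algorithm~\ref{alg:decomp} after the corresponding iteration, \emph{up to radical}. One also has to match the leaf nodes~$h_i$ hanging off~$\nu$ with the ideals accumulated in the set~$\mathcal K$ of Algorithm~\ref{alg:decomp}. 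All comparisons are modulo radical, which is exactly the level at which Lemma~\ref{lem:decomp} and Lemma~\ref{prop:another_eqn} operate, so this is the right equivalence to carry through the induction.

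\textbf{Key steps.} First, I would establish termination. The outer loop runs $c$ times; each inner \textbf{Loop} on line~\ref{line:whilemu} and line~\ref{line:whilegamma}, and the cleaning loops on line~\ref{line:loopcleaning-inner}, terminate because \textsc{GetSyzygy} eventually returns~$0$ (the consequence of Proposition~\ref{prop:getsyzygy}\ref{item:getsyz:nonredundant} spelled out right after that proposition, via Dickson's lemma); and the calls to \textsc{InsertNode} followed by \textsc{GetSyzygy} never deadlock because \textsc{GetSyzygy} is guaranteed to make progress even when nodes are inserted in between, again by the sGB-tree specification. The final call to \textsc{Basis} terminates by the corresponding proposition. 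Second, I would prove the correctness invariant by induction on~$k$: writing~$J$ and~$\mathcal K$ for the state of Algorithm~\ref{alg:decomp} after iteration~$k$, we have~$I_{<\nu} \radeq J$ and the ideals generated by the children of~$\nu$ are, up to radical, generic-enough representatives of the members of~$\mathcal K$ that they still generate the intersection appearing in loop-invariant~\eqref{eq:4}. The body of the induction step unfolds as: (a) inserting~$f_k$ above~$\nu$ and then saturating via the~$g_i$'s realizes the assignment~$J \leftarrow (J : f_k^\infty) + \langle f_k\rangle$ — here one invokes that \textsc{GetSyzygy} on~$\mu$, by Theorem~\ref{theorem:f5}(ii) restricted to indices above~$\mu$, produces generators of~$I_{<\mu} : f_k$, and iterating the insertion-then-resyzygy process reaches the full saturation~$I_{<\mu} : f_k^\infty$, because~$I_{<\mu} + \langle S_\mu\rangle = I_{<\mu}:f_k$ and the inserted generators feed back into the next \textsc{GetSyzygy} call; (b) the inner loop on~$\gamma$ collects generators of the double quotient~$I_{<\mu} : (I_{<\mu}:f_k^\infty)$, i.e.\ of the ideal~$K_k$ of Theorem~\ref{thm:correctness}, and the generic linear combination~$h$ is a single element whose vanishing locus equals~$V(K_k)$ generically — this is where the randomization / slack-variable of line~\ref{line:randomscalar} is used, and one should cite the standard fact that a generic linear combination of a finite generating set defines the same radical on the relevant component (or, with the slack variable, defines it over~$\mathbb K(t)$); (c) the cleaning loop on line~\ref{line:loopcleaning-inner} over the children~$\beta$ of~$\nu$ saturates~$I_{<\nu}$ by each~$K_j$, matching line~\ref{line:decom:cleaning} of Algorithm~\ref{alg:decomp}, again because \textsc{GetSyzygy} on a child~$\beta$ returns generators of~$I_{<\nu} : \poly(\beta)$ and iterating reaches~$I_{<\nu} : \poly(\beta)^\infty$, which up to radical is~$I_{<\nu} : K_j^\infty$ by Lemma~\ref{prop:another_eqn}\ref{it:satwitheq} and the genericity in (b). Finally, having matched the full state after iteration~$c$, the output~$\textsc{Basis}(\nu)$ is a Gröbner basis of~$I_{<\nu} \radeq J_c$, and~$V(J_c)$ is the nondegenerate locus by Theorem~\ref{thm:correctness}.

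\textbf{Main obstacle.} The delicate point is not the radical-level ideal identities — those are essentially a replay of the proof of Theorem~\ref{thm:correctness} — but rather justifying that the \emph{iterated} pattern ``insert a \textsc{GetSyzygy} output as a node, then call \textsc{GetSyzygy} again'' actually computes the \emph{saturation} and not merely one quotient. One must argue that inserting~$g \in I_{<\nu}:\poly(\nu)$ as an ancestor of~$\nu$ enlarges~$I_{<\nu}$ to~$I_{<\nu} + \langle g\rangle$ and that the process stabilizes exactly when~$I_{<\nu}$ has become saturated, which requires the specification clause that \textsc{GetSyzygy} returns~$0$ precisely when~$I_{<\nu}$ already equals~$I_{<\nu}:\poly(\nu)$ (Proposition~\ref{prop:getsyzygy}\ref{itme:getsyz:zero}), combined with a Noetherian ascending-chain argument that the chain~$I_{<\nu} \subseteq (I_{<\nu}:\poly(\nu)^\infty)$ is reached in finitely many steps. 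A secondary subtlety is that nodes are inserted \emph{above}~$\nu$ (resp.\ above~$\mu$), so the index of every previously created node changes its position in the tree order; one must check that this reindexing is exactly the situation the sGB-tree propositions were stated to cover (``even if nodes are inserted in between''), so that no signature-based reduction performed earlier is invalidated. I would isolate these two points as short lemmas before running the main induction.
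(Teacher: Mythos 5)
Your proposal follows essentially the same route as the paper: termination via the guarantee that \textsc{GetSyzygy} eventually returns $0$ (Dickson's lemma), and correctness by an induction matching $I_{<\nu}$ after iteration $k$ with the ideal $J_k$ of Algorithm~\ref{alg:decomp}, with the loop on line~\ref{line:whilemu} realizing $(J_{k-1}:f_k^\infty)+\langle f_k\rangle$, the children of $\nu$ realizing the ideals $K_j$ via generic linear combinations (Lemma~\ref{prop:gen_colon}), and the cleaning loop realizing the saturations. The only point you gloss over is that a single node $\gamma_i$ yields only the quotient $I_{<\gamma_i}:g_i$, and one needs the intersection of these over all $i$ to recover $K_k$ up to radical — the paper supplies this with Lemma~\ref{lem:lazy_rep} — but your overall plan and the two obstacles you isolate are exactly those the paper addresses.
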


\begin{proof}
  Termination follows from the assumption that for any node~$\nu$ of an sGB tree~$\mathcal{T}$,
  \Call{GetSyzygy}{$\mathcal{T}$, $\nu$} eventually returns~0 after sufficiently many calls.

  To prove correctness, we show that Algorithm~\ref{alg:nondeg-f5tree} computes the same ideal as Algorithm~\ref{alg:decomp}.
  Let~$J_{k-1}$ be the value of~$I_\nu$ at the beginning of the $k$th iteration.
  After line~\ref{line:insertmu}, we also have~$I_{<\mu} = J_{k-1}$, while~$I_{<\nu} = I_{<\mu} + \langle f_k \rangle$.

  We first examine the loop on line~\ref{line:whilemu}.
  It inserts above the node~$\mu$ all the polynomials obtained from \Call{GetSyzygy}{$\mathcal{T}$, $\mu$}.
  Every node inserted on line~\ref{line:insertg} is in~$I_{<\mu} : f_k$. No other node is inserted above~$\mu$. So by induction, it follows
  that all along the loop, we have~$I_{<\mu} \subseteq J_{k-1} : f_k^\infty$.
  Moreover, after the loop terminates, we have~$I_{<\mu} : f_k = I_{<\mu}$, due to the specification of \textsc{GetSyzygy} (Proposition~\ref{prop:getsyzygy}).
  It follows that before line~\ref{line:loopcleaning}, we have
  \begin{equation}\label{eq:1}
    I_{<\mu} = J_{k-1} : f_k^\infty \quad \text{and} \quad I_{<\nu} = ( J_{k-1} : f_k^\infty) + \langle f_k \rangle.
  \end{equation}

  Next, we examine the loop on line~\ref{line:loopcleaning} and its inner loop on line~\ref{line:loopcleaning-inner}.
  By the same argument as above, the inner loop has the effect of saturating~$I_{<\nu}$ by $\operatorname{pol}(\beta)$.
  So after the loop on line~\ref{line:loopcleaning}, we have
  \begin{equation}
    I_{<\nu} = J_k = \left( ( J_{k-1} : f_k^\infty) + \langle f_k \rangle \right) : \bigg( \prod_{\beta \text{ child of } \nu} \operatorname{pol}(\beta) \bigg)^\infty.\label{eq:3}
  \end{equation}

  It remains to understand the nature of the children of~$\nu$.
  They all come from the insertion of~$h$ on line~\ref{line:inserth}.
  And~$h$ is simply a generic linear combination of the return values of \Call{GetSyzygy}{$\mathcal{T}$, $\gamma$}.
  So~$h$ is a generic linear combination of some~$h_1,\dotsc,h_r$ such that~$I_{<\gamma} + \langle h_1,\dotsc,h_r\rangle = I_{<\gamma} : \poly(\gamma)$ (by Proposition~\ref{prop:getsyzygy}).
  For each node~$\gamma$ inserted on line~\ref{line:insertg}, let~$L_\gamma$ denote the ideal~$I_{<\gamma} : \poly(\gamma)$.
  If~$g_1,\dotsc,g_s$ are the successive return values of \Call{GetSyzygy}{$\mathcal{T}$, $\mu$} on line~\ref{line:getsyzmu},
  and~$\gamma_1,\dotsc,\gamma_r$ the corresponding nodes,
  we have~$L_{<\gamma_{i}} = I_{<\gamma_i} : g_i$
  and~$I_{<\gamma_i} = J_{k-1} + \langle g_1,\dotsc,g_i \rangle$.
  By Lemma~\ref{lem:lazy_rep}, it follows that
  \begin{equation}\label{eq:2}
    L_{\gamma_1} \cap \dotsb \cap L_{\gamma_r} \radeq J_{k-1} : \langle g_1,\dotsc,g_r \rangle^\infty.
  \end{equation}
  Moreover, by \eqref{eq:1}, we obtain that before line~\ref{line:loopcleaning}
  \begin{equation}
    I_{<\mu} = J_{k-1} + \langle g_1,\dotsc,g_r \rangle = J_{k-1}:f_k^\infty,
  \end{equation}
  so, combining with \eqref{eq:2},
  \begin{align}
    L_{\gamma_1} \cap \dotsb \cap L_{\gamma_r} &\radeq  J_{k-1} : \langle g_1,\dotsc,g_r \rangle^\infty \\
                                               &=  J_{k-1} : \big( J_{k-1} + \langle g_1,\dotsc,g_r \rangle\big)^\infty \\
                                               &\radeq J_{k-1} : (J_{k-1}:f_k^\infty). \label{eq:5}
  \end{align}

  As remarked above, the loop on line~\ref{line:loopcleaning-inner}
  has the effect of saturating~$I_{<\nu}$ by $\operatorname{pol}(\beta)$.
  By the analysis above, $\operatorname{pol}(\beta)$ is actually a generic linear combination of
  some $h_1,\dotsc,h_r$ such that~$I_{<\gamma} + \langle h_1,\dotsc,h_r\rangle = L_\gamma$,
  for some node~$\gamma$ above~$\nu$.
  By Lemma~\ref{prop:gen_colon},
  saturating by~$\operatorname{pol}(\beta)$ is the same as saturating by~$\langle h_1,\dotsc,h_r\rangle$.
  Besides, $I_{<\nu}$ contains~$I_{<\gamma}$,
  so saturating~$I_{<\nu}$ by~$\langle h_1,\dotsc,h_r\rangle$ is the same as saturating by~$L_\gamma$.
  Back to~\eqref{eq:3}, we conclude from~\eqref{eq:5} that saturating~$I_{<\nu}$ by all the~$\operatorname{pol}(\beta)$
  is the same as saturating by all the ideals~$J_{i-1}:(J_{i-1} : f_i^\infty)$, for~$i \leq k$.

  Therefore~$J_k$ satisfies the same recurrence relation as its analogue defined the proof of Theorem~\ref{thm:correctness}:
  \begin{equation}
    J_k = \bigg( ( J_{k-1} : f_k^\infty) + \langle f_k \rangle \bigg) : \bigg( \bigcap_{i\leq k} \big(J_{i-1}:(J_{i-1} : f_i^\infty)\big) \bigg)^\infty.
  \end{equation}
  This proves that Algorithm~\ref{alg:nondeg-f5tree} and Algorithm~\ref{alg:decomp} compute the same ideal.
\end{proof}

\begin{lemma}
  \label{lem:lazy_rep}
  Let $I,J\subseteq R$ be two ideals and let $J=\langle g_1,\dots,g_t\rangle$.
  Then
  \begin{align*}
    (I:J) \radeq (I:g_1) \cap ((I + \langle g_1\rangle) :g_2) \cap \dots \cap ((I + \langle g_1,\dots,g_{t-1} \rangle) : g_t).
  \end{align*}
\end{lemma}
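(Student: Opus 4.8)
The plan is to prove the two radical inclusions separately. First I would record the elementary identity $I : J = \bigcap_{i=1}^{t} (I : g_i)$, valid because $J$ is generated by $g_1, \dots, g_t$; this reduces the statement to comparing $\bigcap_{i} (I : g_i)$ with $\bigcap_{i} \big((I + \langle g_1, \dots, g_{i-1}\rangle) : g_i\big)$.

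The inclusion of $I : J$ into the right-hand side holds already on the nose, without passing to radicals: since $I \subseteq I + \langle g_1, \dots, g_{i-1}\rangle$ we get $I : g_i \subseteq (I + \langle g_1, \dots, g_{i-1}\rangle) : g_i$ for every $i$, and intersecting over $i$ gives the claim.

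For the reverse inclusion up to radical, I would pick $p$ in the right-hand side and show, by induction on $i$, that some power $p^{M_i} g_i$ lies in $I$. For $i = 1$ the membership $p \in (I : g_1)$ is exactly $p g_1 \in I$. For the inductive step, $p \in (I + \langle g_1, \dots, g_{i-1}\rangle) : g_i$ means $p g_i = a + \sum_{j<i} b_j g_j$ with $a \in I$ and $b_j \in R$; multiplying through by $p^{M}$ with $M := \max_{j < i} M_j$ sends each summand into $I$ (the term $p^{M} a$ because $a \in I$, the terms $b_j\, p^{M} g_j$ by the induction hypothesis), so $p^{M+1} g_i \in I$. Setting $N := \max_{i} M_i$ we get $p^{N} g_i \in I$ for every generator of $J$, hence $p^{N} J \subseteq I$, i.e. $p \in \sqrt{I : J}$; this is the content of the reverse inclusion.

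A more structural variant is to induct on $t$: one shows $(I : J) \radeq (I : g_1) \cap \big((I + \langle g_1 \rangle) : \langle g_2, \dots, g_t \rangle\big)$ by the same ``multiply by a power of $p$'' manipulation in the two-summand case, applies the inductive hypothesis to the second factor, and then uses that $\radeq$ is compatible with intersections since $\sqrt{A \cap B} = \sqrt{A} \cap \sqrt{B}$. I do not anticipate a real obstacle; the only point requiring a little attention is the exponent bookkeeping, namely arranging a single $N$ that works for all generators at once, which is what lets one pass from ``$p$ lies in $\sqrt{I : g_i}$ for each $i$'' to ``$p$ lies in $\sqrt{I : J}$''.
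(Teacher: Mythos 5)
Your proof is correct and takes essentially the same route as the paper's: the left-to-right inclusion holds on the nose, and the reverse inclusion is obtained by inducting over the generators $g_1,\dots,g_t$ and multiplying by suitable powers of $p$ to absorb the lower-index terms into $I$. Your exponent bookkeeping (a single uniform $N$) is in fact slightly tidier than the paper's.
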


\begin{proof}
  The inclusion ''$\subseteq$'' is obvious. Now, let $p\in R$ be such that
  \begin{align*}
    p^m\in (I:g_1) \cap ((I + \langle g_1\rangle) :g_2) \cap \dots \cap ((I + \langle g_1,\dots,g_{t-1} \rangle) : g_t)
  \end{align*}
  for some $m\in \mathbb{N}$. Then we have in particular $p^mg_1\in I$. Now let
  $i > 1$. By induction, if for some $k\in \mathbb{N}$ we have $p^kg_j\in I$ for
  all $j\leq i$ then
  \begin{align*}
    p^{km}g_{i+1} = p^kf + p^ka_1g_1 + \dots + p^ka_ig_i \in I
  \end{align*}
  for a suitable $f\in I$, $a_1,\dots,a_i\in R$ and so $p^{km}\in (I:g_{i+1})$.
  We deduce that a power of $p$ actually lies in $(I:J)$ which ends the proof.
\end{proof}

\begin{lemma}
  \label{prop:gen_colon}
  Let $I,J\subseteq R$ be two ideals with $J = \langle g_1,\dots,g_t\rangle$.
  \begin{enumerate}
  \item There exists a Zarisiki-open subset $D\subset \mathbb{K}^t$ such that for any $(a_1,\dots,a_t)\in D$ we have $\sat{I}{J} = \sat{I}{(\sum_{j=1}^ta_jg_j)}$.
  \item If $K\radeq J$ then $\sat{I}{K} \radeq \sat{I}{J}$.
  \end{enumerate}
\end{lemma}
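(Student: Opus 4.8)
The plan is to prove the two items in sequence, with the first being the substantive one and the second a short consequence of it.

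For item (1), I would argue by double inclusion, where one direction holds for \emph{every} choice of coefficients and the other holds generically. The inclusion $\sat{I}{J} \subseteq \sat{I}{(\sum_j a_j g_j)}$ is immediate and coefficient-independent: if $p \cdot J^k \subseteq I$ then in particular $p \cdot (\sum_j a_j g_j)^k \in I$ since $(\sum_j a_j g_j)^k$ is an $R$-linear combination of degree-$k$ monomials in the $g_j$, each of which lies in $J^k$. For the reverse inclusion, I want to show that for generic $(a_1,\dots,a_t)$, a single power $\ell=\sum_j a_j g_j$ already ``sees'' all of $J$ up to saturation, i.e. $\sat{I}{\ell} \subseteq \sat{I}{J}$. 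The cleanest route is to pass to the primary decomposition: write $I = Q_1 \cap \dots \cap Q_s$ with $\sqrt{Q_i} = \pr{p}_i$. Then $\sat{I}{J} = \bigcap_{i : J \not\subseteq \pr{p}_i} Q_i$, and similarly $\sat{I}{\ell} = \bigcap_{i : \ell \notin \pr{p}_i} Q_i$. So it suffices to choose $(a_1,\dots,a_t)$ so that, for every $i$ with $J \not\subseteq \pr{p}_i$, we also have $\ell \notin \pr{p}_i$. For a fixed such $i$, the set of $(a_1,\dots,a_t)$ with $\sum_j a_j g_j \in \pr{p}_i$ is the zero locus of a nonzero $\mathbb{K}$-linear form on $\mathbb{K}^t$ (nonzero precisely because some $g_j \notin \pr{p}_i$, using that $\pr{p}_i$ is prime and the $g_j$ are fixed elements), hence a proper Zariski-closed subset; intersecting the complements over the finitely many relevant $i$ gives the desired nonempty Zariski-open $D$. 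A mild subtlety here is that $\mathbb{K}$ need not be infinite, so ``nonempty Zariski-open in $\mathbb{K}^t$'' should be read as ``nonempty Zariski-open in $\ol{\mathbb{K}}^t$'', or one notes that the statement is really about the generic point; I would state it over $\ol{\mathbb{K}}$ to match the conventions of the paper and the use made of it (generic linear combinations / slack variables).

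For item (2), suppose $\sqrt{K} = \sqrt{J}$, say $K = \langle h_1,\dots,h_u\rangle$. Then each $h_i$ satisfies $h_i^{N} \in J$ for some $N$, so $h_i^N$ is an $R$-combination of the $g_j$; conversely each $g_j^M \in K$. From $h_i^N \in J$ one gets $\sat{I}{J} \subseteq \sat{I}{h_i}$ for all $i$, hence $\sat{I}{J} \subseteq \bigcap_i \sat{I}{h_i} = \sat{I}{K}$ (the last equality because $\sat{I}{K} = \sat{I}{\langle h_1,\dots,h_u\rangle} = \bigcap_i \sat{I}{h_i}$, which itself follows from item (1) applied with $K$ in place of $J$, or directly). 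The reverse inclusion is symmetric. Alternatively, and perhaps more cleanly, one observes that $\sat{I}{J}$ depends only on $\sqrt{J}$ through the primary-decomposition formula $\sat{I}{J} = \bigcap_{i : J \not\subseteq \pr{p}_i} Q_i$, since $J \subseteq \pr{p}_i \iff \sqrt{J} \subseteq \pr{p}_i \iff \sqrt{K} \subseteq \pr{p}_i \iff K \subseteq \pr{p}_i$; this gives item (2) with no computation at all, and in fact gives equality on the nose, $\sat{I}{K} = \sat{I}{J}$, not merely up to radical.

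The main obstacle, such as it is, is purely bookkeeping in item (1): making sure the ``bad'' linear form attached to each non-embedded-or-embedded prime $\pr{p}_i$ with $J \not\subseteq \pr{p}_i$ is genuinely nonzero, and being careful that only finitely many primes are involved so the finite intersection of Zariski-opens stays nonempty. Everything else is a routine application of primary decomposition and the definition of saturation.
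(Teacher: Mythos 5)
Your proof is correct, and it is genuinely more self-contained than the paper's: for item (1) the paper simply cites an exercise of Eisenbud, whereas you supply the standard argument behind it, namely that for a primary decomposition $I=Q_1\cap\dots\cap Q_s$ one has $\sat{I}{J}=\bigcap_{J\not\subseteq\pr{p}_i}Q_i$ and that a generic linear combination of the $g_j$ avoids exactly the same primes $\pr{p}_i$ as $J$ does. Two small points there: the locus of bad coefficients for a fixed $\pr{p}_i$ is the kernel of the linear map $(a_j)\mapsto\sum_j a_jg_j \bmod \pr{p}_i$, which is a proper linear subspace rather than literally the zero set of a single linear form (it is of course contained in one, so your conclusion stands); and your caveat about finite $\mathbb{K}$ is well taken --- the paper glosses over it too, and in the algorithm the issue is sidestepped by the slack-variable variant. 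For item (2) your route differs from the paper's: the paper argues directly that $p^kJ^l\subseteq I$ and $K^m\subseteq J^l$ for large $m$ imply $p\in\sqrt{\sat{I}{K}}$, while you reduce to single generators (or reuse the primary-decomposition formula). Your version buys a slightly stronger conclusion, namely $\sat{I}{K}=\sat{I}{J}$ on the nose since saturation depends only on $\sqrt{J}$; the paper only needs, and only claims, equality up to radical. Both arguments are sound, so this is a matter of taste rather than a gap.
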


\begin{proof}
  (1) easily follows e.g. from \cite[Exercise 15.41]{eisenbud_comm-alg}. For (2),
  if $p\in R$ such that $p^kJ^l\in I$ for $k,l\in \mathbb{N}$ then for a suitably
  large $m\in \mathbb{N}$ we have $K^m\subseteq J^l$ so $p^kK^m\in I$ and hence
  $p\in \sqrt{\sat{I}{K}}$.
\end{proof}

\begin{remark}[Deterministic variant]\label{remark:deterministic}
  The cleaning steps in Algorithm~\ref{alg:nondeg-f5tree} can be made in a randomized way, with a possibility of undetected error,
  or in a deterministic way. The only change to operate is on line~\ref{line:randomscalar}.
  For a randomized algorithm, favoring speed over certain correctness, choose~$t$ to be a random scalar.
  For a deterministic algorithm, choose~$t$ to be a slack variable, unused in the input equation. It is guaranteed that such a~$t$ is generic enough.
  Whenever we introduce such a slack variable we can extend the monomial ordering on $R$ in any way we like, since all cofactors of syzygies that are
  inserted as new nodes only involve the variables of $R$. The implementation discussed in the next section exclusively chooses $t$ to be a random
  scalar.
\end{remark}

\section{Implementation and Experiments}


\subsection{Further Implementational Considerations}
\label{sec:opt_sat}

We start by describing some further optimizations in our implementations of Algorithms~\ref{alg:f5syz} and \ref{alg:nondeg-f5tree}.

Both these implementations use an \emph{F4-like reduction strategy}. This means that several $S$-pairs are selected out of the pairset at once and are subsequently, together with their regular reducers, organized in a matrix whose rows are labeled by the selected extended sig-poly pairs and whose columns are labeled by all the monomials occuring in the polynomial parts of these extended sig-poly pairs. This matrix is then put into row echelon form and the rows of this reduced matrix whose first entry has changed during the computation of this row echelon form are then processed as new basis elements or newly identified zero divisors, depending on if this reduced row is zero or not. We refer to \cite{faugere_f4} for the original F4 algorithm or to \cite[section 13]{eder_faugere_survey} for an explanation as to how to combine the F4 algorithm with signature-based techniques.

For Algorithm \ref{alg:nondeg-f5tree}, this has the consequence that the \textsc{GetSyzygy} routine has the ability to return several zero divisors $g_1,\dots,g_s$ at once and Algorithm \ref{alg:nondeg-f5tree}
may benefit from it.
We implemented the following probabilistic optimization: We replaced $g_1$ by a random linear combination $g_1':= \sum_{j=1}^sa_ig_i$. Let $\nu_1,\dots,\nu_s$ be the nodes assigned to $g_1',g_{2},\dots,g_s$ in Algorithm \ref{alg:nondeg-f5tree}. Then, if the choice of the $a_i$ was ``sufficiently random'', we know by Lemma \ref{prop:gen_colon} that for $h\in R$ we have
\begin{align*}
  hg_1'\in I_{< \nu_1}\quad \Leftrightarrow \quad hg_i\in I_{< \nu_1}\forall i.
\end{align*}
If then \textsc{GetSyzygy}$(\mathcal{T}, \nu_1)$ returned such an element $h\neq 0$ we regarded the signatures $(\nu_2,\lm(h)),\dots,(\nu_s,\lm(h))$ as known signatures of syzygies during the calls to \textsc{Rewriteable}, i.e. \textsc{GetSyzygy}$(\mathcal{T}, \nu_i)$ would, for $i=2,\dots,s$, only return a non-zero result if there exists an element $h'\in (I_{<\nu_i}:g_i)$ with $\lm(h')$ not divisible by $\lm(h)$. Furthermore, only the zero divisors of $h$ of $g_1'$ as above were considered in the loop from line 14-20 of Algorithm \ref{alg:nondeg-f5tree}.

We implemented both Algorithm \ref{alg:f5syz} and \ref{alg:nondeg-f5tree} in the
programming language {Julia} \cite{julia} with an interface to the
{Singular.jl} {Julia}-library \cite{Singular_CAS}. An interface to the
new computer algebra system {OSCAR} \cite{OSCAR} is planned for the
future. The implementation is available at

\begin{center}
  \url{https://github.com/RafaelDavidMohr/SignatureGB.jl}
\end{center}

In this implementation we use our own data structures for polynomials and polynomial arithmetic.
The linear algebra routines for computing row echelon forms in our implementations closely follow the corresponding routines presented in
\cite{monagan_f4_implementation}. Additionally, our implementation makes use of the modifications
to Algorithm \ref{alg:f5syz} presented in \cite{eder_perry_f5c}. Currently the implementation
works only for fields of finite characteristic.

While our implementation is currently not competitive with optimized implementations
of Gröbner basis algorithms such as in {Maple} \cite{maple} or {msolve}
\cite{msolve}, we do make use of some standard optimization techniques in Gröbner basis algorithm
implementations such as monomial hash tables and divisor bitmasks (see e.g. \cite{roune_practical}
for a description of these techniques).

\subsection{Experimental Results}
\label{sec:experiments}

We used the following examples to benchmark our implementations:
\begin{enumerate}
\item Cyclic$(8)$, coming from the classical Cyclic$(n)$ benchmark.
\item Pseudo$(n)$, encoding pseudo-singularities as follows
  $$f_1=\cdots=f_{n-1}=g_1\cdots=g_{n-1}$$
  with $f_i\in \mathbb{K}[x_1, \ldots, x_{n-2}, z_1, z_2]$, $f_i\in
  \mathbb{K}[y_1, \ldots, y_{n-2}, z_1, z_2]$, $f_i$ being chosen as a random
  dense quadric and $g_i$ equalling $f_i$ when substituting $y_1, \ldots,
  y_{n-2}$ by $x_1, \ldots, x_{n-2}$.
\item Sos$(s, n)$, encoding the critical points of the restriction of the
  projection on the first coordinate to a hypersurface which is a sum of $s$
  random dense quadrics in $\mathbb{K}[x_1, \ldots,
  x_n]$. 
  $$f, \frac{\partial f}{\partial x_2}, \ldots, \frac{\partial f}{\partial x_n},
  \quad f = \sum_{i=1}^s g_i^2.$$
\item Sing$(n)$, encoding the critical points of the restriction of the
  projection on the first coordinate to a (generically singular) hypersurface
  which is defined by the resultant of two random dense quadrics $A, B$ in
  $\mathbb{K}[x_1, \ldots, x_{n+1}]$:
  $$f, \frac{\partial f}{\partial x_2}, \ldots, \frac{\partial f}{\partial x_n},
  \quad f = \textrm{resultant}(A, B, x_{n+1}).$$
\item The Steiner polynomial system, coming from \cite{BST20}.
\end{enumerate}
All these systems are generated by a number of polynomials equal to the
number of variables of the underlying polynomial ring. They all have
components of different dimensions, one of those being zero-dimensional,
i.e. they have a nontrivial nondegenerate locus.

In Table \ref{table:compare} we compare Algorithm \ref{alg:nondeg-f5tree} and a straightforward
implementation of ours of Algorithm \ref{alg:decomp} in {Maple}. In this implementation,
we saturated an ideal $J$ by an ideal $K$ by picking a random linear combination $p$ of generators
of $K$ and saturating $J$ by $p$ using {Maple}'s internal saturation routine.
Table \ref{table:compare} shows the improvement of Algorithm \ref{alg:nondeg-f5tree} over Algorithm
\ref{alg:decomp}: While {Maple}'s Gröbner basis engine beats our implementation of
Algorithm \ref{alg:f5syz} by a wide margin the ratio between the timings of our F5 implementation
and our implementation of Algorithm \ref{alg:nondeg-f5tree} is much better than the ratio between
the time it took to compute a Gröbner basis in {Maple} and our {Maple} implementation of Algorithm \ref{alg:decomp}. This can be seen by looking at the two respect ``ratio''-columns of table \ref{table:compare}. To additionally show the overhead of Algorithm \ref{alg:nondeg-f5tree} over Algorithm \ref{alg:f5syz} we noted the number of arithmetic operations in $\mathbb{K}$ when running each of the two algorithms on the polynomial system in question. Our implementation of Algorithm \ref{alg:nondeg-f5tree} never takes more than 10 times the number of arithmetic operations Algorithm \ref{alg:f5syz} takes, on certain examples we compare very favorably in terms of arithmetic operations to Algorithm \ref{alg:f5syz}.
\looseness=-1

In Table \ref{table:other} we compare Algorithm \ref{alg:nondeg-f5tree} to other ideal decomposition
methods available in the computer algebra systems {Singular}, {Maple} and
{Macaulay2} \cite{M2}. In {Singular} there is an elimination method~\cite{decker_greuel_pfister_survey} and an implementation of the algorithm for equidimensional decomposition presented in \cite{EisenbudHunekeVasconcelos_1992}. In {Maple} we compared against the Regular Chains
package \cite{chen_maza_triangular, CLMPX07}. In {Macaulay2} one is able to compute the
intersection of all components of non-minimal dimension again with the method presented in \cite{EisenbudHunekeVasconcelos_1992}. We then saturated the original ideal by
the result to obtain the nondegenerate locus. On a high level, our algorithm works similarly, incrementally obtaining information
about the component of higher dimension and then removing it via saturation. One should keep
in mind that all of these methods, compared to Algorithm \ref{alg:nondeg-f5tree}, work more generally:
Except for what we tried in {Macaulay2} they are all able to obtain a full equidimensional
decomposition of the input ideal.

We gave all of these methods at least an hour for each polynomial system and at most roughly 50 times the time our implementation of Algorithm \ref{alg:nondeg-f5tree} took. We indicated when these times were exceeded by using ''>'' in Table \ref{table:other}. We computed all examples on a single Intel Xeon Gold 6244 CPU @ 3.60GHz with a limit of 200G memory. If this limit was exceeded, or if another segfault occured, we indicate it with 'segfault' in Table \ref{table:other}.

\newpage
  \centering
\begin{table*}[hbt!]
  \caption{Comparing Algorithm \ref{alg:decomp} and Algorithm \ref{alg:nondeg-f5tree}}
  \label{table:compare}
  \begin{minipage}{1.0\linewidth}
        \begin{adjustwidth}{-2cm}{-1cm}
    \begin{center}
      \scriptsize
      \begin{tabular}{lllllllll}
        \toprule
        & Alg. \ref{alg:f5syz} arith. op. & Alg. \ref{alg:nondeg-f5tree} arith. op. & Alg. \ref{alg:f5syz} & Alg. \ref{alg:nondeg-f5tree} & Ratio & GB in Maple & Alg. \ref{alg:decomp} in Maple & Ratio\\
\midrule
Cyclic 8 & $1.2\cdot 10^{10}$ & $1.3\cdot 10^{11}$ & 4m & 40m & 10 & 1.2s & 154m & 7700\\
Pseudo(2, 12) & $5.3\cdot 10^{7}$ & $3.1\cdot 10^{8}$ & 1.16s & 5.2s & 4.5 & 0.268s & 3.44s & 13\\
Sing(2, 10) & $5.6\cdot 10^{7}$ & $6.5\cdot 10^{7}$ & 1.9s & 2.9s & 1.5 & 0.11s & 1.642s & 14.5\\
Sing(2, 9) & $2.5\cdot 10^{7}$ & $2.9\cdot 10^{7}$ & 1.1s & 1.4s & 1.27 & 0.06s & 0.788s & 13.1\\
Sos(2,5,4) & $1.3\cdot 10^{8}$ & $1.1\cdot 10^{8}$ & 8.5s & 7.3s & 0.85 & 0.022s & 0.479s & 21.3\\
Sos(2,6,3) & $2.1\cdot 10^{7}$ & $2.1\cdot 10^{7}$ & 1.11s & 1.4s & 1.26 & 0.021s & 0.261s & 12.4\\
Sos(2,6,4) & $4.8\cdot 10^{9}$ & $3.8\cdot 10^{9}$ & 148s & 169s & 1.14 & 0.172s & 22.7s & 132\\
Sos(2,6,5) & $4.2\cdot 10^{9}$ & $2.0\cdot 10^{9}$ & 75s & 43s & 0.57 & 0.458s & 10.38s & 22.7\\
Sos(2,7,3) & $1.3\cdot 10^{8}$ & $6.7\cdot 10^{8}$ & 5.2s & 41s & 7.9 & 0.047s & 7.162s & 152.4\\
Sos(2,7,4) & $6.5\cdot 10^{9}$ & $4.5\cdot 10^{10}$ & 3m & 32m & 10.7 & 0.433s & 1h & 8314\\
Sos(2,7,5) & $7.2\cdot 10^{10}$ & $3.5\cdot 10^{11}$ & 25m & 20h & 48 & 2.294s & >359h & $>4.4\cdot10^6$\\
Sos(2,7,6) & $1.7\cdot 10^{12}$ & $3.0\cdot 10^{12}$ & 31h & 73h & 2.4 & 14.348s & 5.5h & 23\\
        Steiner & $3.1\cdot 10^{10}$ & $2.3\cdot 10^{11}$ & 4.2m & 42m & 10 & 27s & 13m & 28.9\\
        \bottomrule
      \end{tabular}
    \end{center}
    \end{adjustwidth}
\end{minipage}
\end{table*}

\begin{table*}[hbt!]
  \caption{Comparing with other Decomposition Methods}
  \label{table:other}
  \begin{adjustwidth}{-2cm}{-1cm}
    \begin{center}
      \scriptsize
    \begin{tabular}{llllll}
      \toprule
      & Algorithm \ref{alg:nondeg-f5tree} & {Singular}: Elimination Method & {Singular}: Algorithm in \cite{EisenbudHunekeVasconcelos_1992} & {Maple}: Regular Chains & {Macaulay2}\\
      \midrule
Cyclic 8 & 40m & segfault & >35h & >35h & >35h\\
Pseudo(2, 10) & 0.3s & 40s & >1h & >1h & >1h\\
Pseudo(2, 12) & 5.2s & >1h & >1h & >1h & >1h\\
Pseudo(2, 6) & 0.008s & <1s & <1s & 0.29s & 0.07s\\
Pseudo(2, 8) & 0.03s & <1s & 23m & 5.82s & 13.78s\\
Sing(2, 10) & 2.9s & >1h & >1h & >1h & >1h\\
Sing(2, 4) & 0.02s & 1s & >1h & 91.32s & 0.42s\\
Sing(2, 5) & 0.07s & 4s & >1h & >1h & 1.94s\\
Sing(2, 6) & 0.15s & 56s & >1h & >1h & 16.64s\\
Sing(2, 7) & 0.35s & 8m & >1h & >1h & 289s\\
Sing(2, 8) & 0.68s & 23m & >1h & >1h & >1h\\
Sing(2, 9) & 1.4s & >1h & >1h & >1h & >1h\\
Sos(2,4,2) & 0.03s & <1s & <1s & 19.4s & 0.16s\\
Sos(2,4,3) & 0.03s & 1s & 3m & 14m & 0.63s\\
Sos(2,5,2) & 0.02s & <1s & >1h & >1h & 0.37s\\
Sos(2,5,3) & 0.34s & >1h & >1h & >1h & 9.35s\\
Sos(2,5,4) & 7.3s & >1h & >1h & >1h & 183s\\
Sos(2,6,2) & 0.17s & <1s & >1h & >1h & 0.7s\\
Sos(2,6,3) & 1.4s & >1h & >1h & >1h & 107s\\
Sos(2,6,4) & 169s & >140m & >140m & >140m & >140m\\
Sos(2,6,5) & 43s & >1h & >1h & >1h & >1h\\
Sos(2,7,2) & 2.91s & <1s & >1h & 2.94s & 0.18s\\
Sos(2,7,3) & 41s & >1h & >1h & >1h & >1h\\
Sos(2,7,4) & 32m & >26h & segfault & >26h & >26h\\
Sos(2,7,5) & 20h & segfault & segfault & >200h & >200h\\
Sos(2,7,6) & 73h & segfault & segfault & >334h & >500h\\
Steiner & 42m & >50h & segfault & >50h & >50h\\
      \bottomrule
\end{tabular}
\end{center}
\end{adjustwidth}
\end{table*}

\printbibliography

\end{document}